\title{Renormalization in string-localized field theories: 
\\
        a microlocal analysis}
\author{Christian Gaß$^{1}$ %
\footnote{Email: cgass@uni-goettingen.de}%
\\[6pt]
{\footnotesize $^1$Institut für Theoretische Physik, 
Georg-August-Universität Göttingen, 37077 Göttingen, Germany}
}
\date{\today}
\DeclareMathOperator{\supp}{supp}   
\DeclareMathOperator{\wf}{WF}
\DeclareMathOperator{\singsupp}{singsupp}
\DeclareMathOperator{\chs}{char}
\newcommand{\eps}{\varepsilon}      
\newcommand{\ka}{\kappa}            
\newcommand{\la}{\lambda}           
\newcommand{\bN}{\mathbb{N}}        
\newcommand{\bR}{{\mathbb{R}}}      
\newcommand{\bS}{\mathbb{S}}        
\newcommand{\mink}{\mathbb{R}^{1+3}} 
\newcommand{\sD}{\mathcal{D}}       
\newcommand{\sS}{\mathcal{S}}       
\newcommand{\sF}{\mathcal{F}}       
\newcommand{\sLi}{\mathcal{L}_\textup{int}} 
\newcommand{\del}{\partial}         
\newcommand{\aside}[1]{} 
\newcommand{\set}[1]{\{\,#1\,\}}  
\newcommand{\vev}[1]{\langle\!\langle#1\rangle\!\rangle} 
\newcommand{\word}[1]{\quad\text{#1}\quad} 
\newcommand{\stringarray}{\boldsymbol{e}} 
\def\wick:#1:{\,\mathopen:#1\mathclose:\,} 
\def\epsi^#1_#2{\eps^{#1}{}_{\!#2}} 
\def\epsii_#1^#2{\eps_{#1}{}^{\!#2}} 
\def\lLa^#1_#2{\Lambda^{#1}{}_{#2}}  
\def\duo<#1,#2>{\langle#1,#2\rangle} 
\def\scal<#1|#2>{\langle#1\mathbin|#2\rangle} 
\theoremstyle{plain}
\newtheorem{thm}{Theorem}           
\numberwithin{thm}{section}         
\newtheorem{lema}[thm]{Lemma}       
\newtheorem{corl}[thm]{Corollary}   
\theoremstyle{definition}
\newtheorem{defn}[thm]{Definition}  
\newtheorem{remk}[thm]{Remark}      
\newtheorem{exap}[thm]{Example}     
\renewcommand{\section}{\@startsection{section}{1}{\z@}%
                       {-3.5ex \@plus -1ex \@minus -.2ex}%
                       {2.3ex \@plus.2ex}%
                       {\normalfont\large\bfseries}}
\renewcommand{\subsection}{\@startsection{subsection}{2}{\z@}%
                       {-3.25ex \@plus -1ex \@minus -.2ex}%
                       {1.5ex \@plus .2ex}%
                       {\normalfont\normalsize\bfseries}}
\numberwithin{equation}{section}
\begin{document}

\maketitle

\medskip

\begin{abstract}
 Using methods of microlocal analysis, we prove that the regularization  
 of divergent amplitudes stays a pure ultraviolet problem in string-localized 
 field theories, despite the weaker localization. Thus, power counting does not 
 lose its significance as an indicator for renormalizability. It also follows 
 that standard techniques can be used to regularize divergent amplitudes 
 in string-localized field theories.
\end{abstract}

\medskip

\textit{Keywords}:
Renormalization, string-localized field theory, microlocal analysis

\section{Introduction}
 \label{sec:intro_slf}
 The foundations of string-localized field theory (SLFT) 
 have been developed thoroughly by Mund, Schroer and Yngvason 
 \cite{MundSY04,MundSY06} in the mid 2000's  and since then, 
 SLFT has been under constant investigation and advancement. 
 At the heart of SLFT is a weaker localization of the 
 potentials of the field strength tensors (for arbitrary 
 masses and spins respectively helicities) along a semi-infinite line 
 referred to as \textit{string}.\footnote{Note that the term string 
 refers to a weaker localization of certain quantum fields and 
 is not  to be confused with the strings of string theory.} 
 These string-localized potentials replace the usual, point-localized, 
 gauge potentials in quantum field theory (QFT) and in that sense, 
 SLFT is not a separate theory. It rather is a different setting 
 within the framework of QFT that exhibits many desirable 
 properties, which we will briefly sketch in the 
 following.

 String-localization has been known for a long time 
 before the works of Mund, Schroer and Yngvason, and has 
 been observed and described more or less explicitly at 
 many occasions in the past:  
 In old works of Jordan \cite{Jordan35} and later also Dirac 
 \cite{Dirac55} on gauge invariant formulations of quantum 
 electrodynamics (QED), the string-localized nature of the dressing 
 factor of the electron field is clearly visible. 
 A derivation of that dressing factor within 
 SLFT as well as an investigation of its consequences on QED 
 has been worked out recently \cite{MundRS21} with the emphasis on 
 the infrared problems of QED.
 
 Also Mandelstam \cite{Mandelstam62} 
 studied QED by employing expressions that 
 clearly resemble the modern string-localized fields. 
 String-locality later reappeared in considerations of Buchholz 
 and Fredenhagen \cite{BF82} and also Steinmann 
 \cite{Steinmann82,Steinmann84}.

 The modern formulation of Mund, Schroer and Yngvason has served 
 as starting point for further investigation of SLFT in the last 
 one-and-a-half decades. It has become clear that string-localized
 fields bear manifold conceptual and practical advantages. First, 
 the string-localized potential for the massless field strength of 
 helicity $s\in\bN$ is a rank-$s$ tensor field that lives on 
 Hilbert space and not on an indefinite Krein space like 
 its point-localized gauge field equivalents \cite{MundSY06,MundRS17b}. 
 Also, string-localization allows for the construction of infinite 
 spin fields \cite{MundSY04}. Moreover, the decoupling of 
 helicities in the massless limit of massive tensor fields is 
 explained by SLFT \cite{MundRS17b}; stress-energy tensors that 
 yield the correct Poincar\'e generators can be constructed 
 for massless fields of arbitrary finite and infinite 
 spin/helicity \cite{MundRS17b,MundRS17a,RehrenPLLimit}, 
 circumventing the Weinberg-Witten theorem \cite{WW80}; 
 the DVZ discontinuity \cite{vDV70,Zh70} 
 in the massless limit of massive gravitons is removed 
 \cite{MundRS17b,MundRS17a}; 
 the Velo-Zwanziger problem \cite{VZ69} has been resolved 
 \cite{Schroer19};
 Gauss' law has been implemented and investigated within 
 SLFT \cite{MundRS20};
 there is no strong CP problem in string-localized QCD \cite{GGBM21}.
 
 To summarize, extensive research on conceptual aspects of SLFT 
 has revealed many benefits. On the other hand, 
 the implementation of string-localized perturbation theory 
 is only in its beginnings. Besides some conceptual considerations 
 \cite{CardosoMV18,MunddeO17}, only computations in low 
 orders and at tree level have been performed. The Lie algebra 
 structure of pure massless Yang-Mills theory and of the weak 
 interaction as well as the chirality of the latter have be 
 derived at second order and tree-level of perturbation 
 theory in a bottom-up approach -- the structure of these 
 interactions is constrained by the requirement that the 
 scattering matrix be string independent \cite{GBMV18,GGBM21}.
 
 Calculations at higher orders of perturbation theory as well as 
 computations of loop graphs involving internal string-localized 
 potentials have not yet been attacked. The main reason for this 
 is the most evident disadvantage of SLFT: The analytic structure 
 of propagators of string-localized potentials is highly complicated. 
 Consequently, an extension of 
 the causal renormalization procedure as described by Epstein and 
 Glaser \cite{EpsteinGlaser73} naively seems very involved and is 
 currently not at hand. In this article, we make a step towards an 
 Epstein-Glaser renormalization scheme in SLFT by proving that 
 the string-localization does actually not affect the singularity 
 structure of the propagators -- provided that care is taken of how 
 the string-localized version of the scattering operator is defined.

 The paper is organized as follows. Section \ref{sec:wf_renorm_slft}
 is a concise introduction to the interrelation of microlocal 
 analysis and renormalization. We also list some basic theorems 
 about wavefront sets, which will be important for our later proofs. 
 The reader familiar with this may skip Section 
 \ref{sec:wf_renorm_slft} and directly proceed to Section 
 \ref{sec:slf}. There, we investigate the distributional nature 
 of string-integration and string-integrated propagators and outline   
 a proper setup of perturbation theory in the string-localized setting.
 Section \ref{sec:WF} contains proofs regarding the existence and 
 extension of products of string-localized propagators. Our results and their 
 connections to other approaches are discussed 
 in Section \ref{sec:discussion}.

 Before continuing, we fix the conventions used in this paper. 
 We employ the mostly negative Minkowski metric 
 $\eta = \text{diag}(1,-1,-1,-1)$. If $x,y$ are Minkowski vectors, 
 we generically denote their Minkowski product by 
 $(xy) := \eta_{\mu\nu} x^\mu y^\nu$ and use $x^2$ for the Minkowski 
 square of $x$. The Fourier transform $\hat{f}(p)=\sF f(p)$ of a 
 function $f(x)$ over Euclidean space $\bR^n$ is defined 
 with negative sign in the exponent, the back transform has a positive sign. 
 All factors of $2\pi$ are absorbed in the back transform. 
 In order to match the physics conventions, the signs of the phase factors 
 in the Fourier 
 transform are inverted over Minkowski space $\mink$ (in addition to   
 the duality pairing being induced by $\eta$). That is,
 \begin{align}
 \label{eq:FT_conventions}
  \hat{f}(p) := \int d^4x \;e^{i(px)} f(x),\quad
  f(x) := \int \frac{d^4p}{(2\pi)^4}\; e^{-i(px)} \hat{f}(p)
 \end{align}
 for a generic $f$ living on $\mink$. When it is relevant, we shall 
 always specify whether statements pertain to $\bR^n$ or $\mink$.
\section{Elements of microlocal analysis needed for renormalization}
\label{sec:wf_renorm_slft}
In the standard approaches to quantum field theory, perturbation 
theory is typically formulated by writing matrix elements of the 
scattering operator as products of numerical distributions 
-- the propagators of the quantum fields involved in a certain model -- 
with the help of Wick's theorem \cite{EpsteinGlaser73}. However, products 
or higher powers of distributions make no sense in general and also 
the products of propagators in the Wick expansion for the scattering 
operator are divergent. At $n$-th order of perturbation theory, they 
only make sense outside the thin diagonal 
$\set{x_1=\cdots=x_n}\subset(\mink)^n$, or after exploiting translation 
invariance, outside the origin $\set{z=0}\subset(\mink)^{n-1}$, 
where $z=(x_1-x_n,\cdots,x_{n-1}-x_n)$.
In momentum space, the non-existence of these products manifests 
itself in the well-known ultraviolet (UV) divergences of loop integrals 
contributing to scattering amplitudes. 
Renormalization in a mathematically rigorous sense is the extension 
of non-existent products of distributions in configuration space 
across the origin $\set{z=0}$ \cite{EpsteinGlaser73,BrunettiF00}.

Once the existence of \textit{some} extension across the origin has 
been established, one must address the question of uniqueness. 
On the one hand, two extensions can only differ by a distribution 
supported at the origin, i.e., by a linear combination of derivatives 
of the Dirac delta, since both extensions must be equal to the 
original distribution outside the origin. On the other hand, adding 
an arbitrary linear combination of derivatives of the Dirac delta to 
a particular extension gives another extension. These ambiguities are 
called \textit{renormalization freedom}. They can be controlled via 
constraints on the short-distance scaling behavior of the extensions, 
i.e., the scaling behavior with respect to $z=0$ \cite{BrunettiF00} 
(cf.~also \cite{Steinmann71}), by requiring that the extension does 
not scale worse than the original distribution. This type of constraint 
is often referred to as \textit{power counting}.

\begin{exap}
 \label{exap:extension_Feynman_prop}
 Consider the massless scalar Feynman propagator $D:=[x^2-i0]^{-1} \in \sS'(\mink)$.
 We will see in Example \ref{exap:product_Feynman_prop} that the square 
 of $D$ is defined on $\mink\setminus0$ but not on the full space 
 $\mink$. For now, we are only interested in constructing an 
 extension. First, note that $D$ is homogeneous, 
 $D(\lambda x) = \lambda^{-2} D(x)$ for all 
 $\lambda>0$. Correspondingly, the square  $(D|_{\mink\setminus0})^2$ scales as 
 $\lambda^{-4}$. Power counting is the requirement that any admissible 
 extension does not scale worse than the non-extended distribution, i.e., 
 one requires that $\lim_{\lambda\downarrow0} \lambda^{4+\omega} w(\lambda x) 
 =0$ for any admissible extension $w$ of $(D|_{\mink\setminus0})^2$ and for all 
 $\omega>0$.
 
 It is a simple task to verify that on $\mink\setminus0$, the square of $D$ 
 coincides with the divergence of the vector-valued distribution 
 \begin{align}
  v^\mu := \frac{1}{2}\frac{x^\mu \,\ln(x^2-i0)}{(x^2-i0)^2}.
 \end{align}
 Since $v^\mu$ is locally integrable with respect to $x$ at $x=0$, it is a 
 well-defined distribution\footnote{The reader may try to verify that 
 the logarithm does not cause any trouble by using the tools that we present 
 in the remaining part of the section.} on the full space $\mink$ and thus, the 
 divergence $\overline{D^2} := \del_\mu v^\mu$ defines an extension of 
 $(D|_{\mink\setminus0})^2$.
 It is also admissible by power counting since 
 $\lim_{\lambda\downarrow0}\lambda^\omega \ln(\lambda^2) =0$ for all $\omega>0$.
 
  An arbitrary extension $w$ of $(D|_{\mink\setminus0})^2$ 
 can only differ from $\overline{D^2}$ by a linear combination of 
 derivatives of the Dirac delta. Power counting introduces an upper 
 bound on the number of derivatives appearing in said linear combination. 
 In the case at hand, 
 \begin{align}
  \label{eq:ren_freedom_exap}
  w- \overline{D^2} = c_0 \delta(x)
 \end{align}
 for some constant $c_0$ and any admissible extension $w$ since the Dirac 
 delta already scales like $\lambda^{-4}$. 
 The free parameter $c_0$ in Eq.~(\refeq{eq:ren_freedom_exap}) 
 introduces a renormalization freedom to the model under consideration. 
 It usually needs to be fixed by physical reasoning. 
 
 The method to obtain the special extension 
 $\overline{D^2}$ is called \textit{differential renormalization} but 
 there are also other well-established methods (see for example 
 \cite{Duetsch18} for an introduction or \cite{BrunettiF00,Dang16}
 for more abstract considerations). 
\end{exap}

\begin{remk}
 The massless Feynman propagator $D$ from Example \ref{exap:extension_Feynman_prop}
 is homogeneous. Therefore, it is obvious how to define the scaling 
 behavior with respect to the origin. A more general definition can for 
 example be found in \cite{BrunettiF00}. 
\end{remk}

A definition of a product of two distributions, which satisfies the known 
rules of calculus, as well as a criterion for its existence was found  
by Hörmander \cite[Theorem 8.2.10.]{Hoermander90} (as a special case of 
Lemma \ref{lema:pullbacks} below). If $u$ and $v$ 
are distributions over an open subset $X\subset\bR^n$, their 
product $uv$ can be defined as the pullback of the tensor product 
$u\otimes v$ by the diagonal map $\Delta: X\rightarrow X\times X$, 
$\Delta(x)=(x,x)$ if 
\begin{align}
\label{eq:hoermander_krit}
 (x;p)\in \wf u \word{implies} (x;-p)\notin \wf v,
\end{align}
where the wavefront set $\wf u$ of a distribution $u$ is a subset of the 
cotangent bundle $\dot{T}^\ast(X)$ over $X$ deprived of the elements $(x;0)$ 
(as indicated by the dot). $\wf u$ gives a refined characterization of  
the singularities of $u$:
\begin{defn}[see Ch.~8 in \cite{Hoermander90}]
 \label{def:singsupp_frequencyset_wf}
 Let $u\in \sD'(X)$ for $X\subset \bR^n$ open. Then the 
 \textit{singular support} $\singsupp u$ of $u$ is the set of points in 
 $X$ that have no open neighborhood where $u$ is smooth. The 
 \textit{frequency set} $\Sigma_x(u)$ of $u$ over a point 
 $x\in X$ is defined as an intersection 
 \begin{align}
 \label{eq:def_frequency_set}
  \Sigma_x(u) := \bigcap_{\substack{\phi\in C_c^\infty(X)\\ \phi(x)\ne0}} 
                 \Sigma(\phi u),
 \end{align}
 where $ \Sigma(\phi u)$ is the cone of directions in $\bR^n\setminus0$ 
 having no conic neighborhood in  
 which the Fourier transform of the compactly supported distribution 
 $\phi u$ is rapidly decaying. Finally, the \textit{wavefront set} 
 $\wf u$ of $u$ is the closed subset of $\dot{T}^\ast(X)$ defined 
 by 
 \begin{align}
  \wf u := \set{ (x;p) \in \dot{T}^\ast(X) \;|\; 
                p \in \Sigma_x(u)}
 \end{align}
 so that the projection of $\wf u$ onto the first component yields 
 the singular support.
\end{defn}
Thus, the wavefront set does not only encode the information 
about the singularities of a distribution but also about the high 
frequencies that are responsible for their appearance. 
It is easy to verify that the wavefront set is a closed and 
conic subset of $\dot{T}^\ast(X)$, where \textit{conic} means that 
the wavefront set is invariant under scaling the second variable 
with positive scalars.

The proofs in Sections \ref{sec:slf} and \ref{sec:WF} will be based on 
several standard statements about properties of the wavefront set. 
For convenience of the reader, we will now concisely list the 
statements on which we will rely later. 

The Hörmander product of two distributions 
$u$ and $v$ is defined as a pullback of their tensor product, provided that 
the criterion (\refeq{eq:hoermander_krit}) is satisfied. One can then 
also give a bound on the wavefront set of the product 
\cite[Theorem 8.2.10.]{Hoermander90}, namely 
 \begin{align}
  \label{eq:bound_wf_product}
  \wf(uv) \subset \set{ (x;p+k) \;|\; (x;p)\in \wf u \text{ or } p=0,\;
                        (x;k)\in \wf v \text{ or } k=0}.
 \end{align}
 The Hörmander product of two distributions is an important special 
 case of the pullback of distributions but we will also need to 
 consider other pullbacks in order to examine the wavefront set 
 of string-localized propagators.
 
 \begin{lema}[Thm.~8.2.4. in \cite{Hoermander90}]
  \label{lema:pullbacks}
 The pullback $f^\ast u$  of a distribution $u \in \sD'(Y)$ 
 by a smooth map $f:X\rightarrow Y$, where $X\subset\bR^m$ and 
 $Y\subset \bR^n$ are open, can be defined such that it coincides with the 
 pullback of smooth maps if $u\in C^\infty(Y)$, provided that 
 $N_f \cap \wf u = \emptyset$, where 
 \begin{align}
  \label{eq:def_Nf}
  N_f := \set{ (f(x);p) \in Y\times \bR^n \;|\; {^t f'}(x) p=0 }.
 \end{align}
 is the set of normals of the map $f$. Moreover, we have 
 \begin{align}
  \label{eq:wf_of_pullback}
  \wf(f^\ast u) \subset f^\ast \wf u 
  := \set{ (x;{^t f'}(x) p) \;|\; (f(x);p) \in \wf u}.
 \end{align}
 \end{lema}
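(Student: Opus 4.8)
The plan is to follow H\"ormander's construction of the pullback: near any point one represents $u$ by its Fourier transform and realizes $f^{\ast}u$ as an oscillatory integral whose convergence is governed precisely by the transversality hypothesis $N_{f}\cap\wf u=\emptyset$. Any sensible definition of $f^{\ast}$ must be local in $X$, so a partition of unity on $X$ together with multiplying $u$ by a cutoff $\chi\in C_{c}^{\infty}(Y)$ equal to $1$ near $y_{0}:=f(x_{0})$ reduces the problem to a neighborhood $U$ of an arbitrary $x_{0}\in X$ and to $u$ of compact support. For such $u$, Paley--Wiener--Schwartz gives $u(y)=(2\pi)^{-n}\int e^{i\langle y,\eta\rangle}\hat u(\eta)\,d\eta$ with $\hat u$ smooth of polynomial growth, and the natural candidate is
\begin{align}
 \langle f^{\ast}u,\varphi\rangle
   :=(2\pi)^{-n}\!\int \hat u(\eta)\,I_{\varphi}(\eta)\,d\eta,
 \qquad
 I_{\varphi}(\eta):=\int \varphi(x)\,e^{i\langle f(x),\eta\rangle}\,dx,
 \qquad \varphi\in C_{c}^{\infty}(U).
\end{align}
When $u\in C^{\infty}$, Fubini returns the classical pullback $\int\varphi\,(u\circ f)$, so it remains only to establish convergence of the $\eta$-integral and continuity in $\varphi$; uniqueness of the resulting distribution then follows from sequential continuity of $f^{\ast}$ on $\set{u\in\sD'(Y)\mid\wf u\subset\Gamma}$, in the H\"ormander topology, together with density of $C^{\infty}(Y)$ there, valid for any closed cone $\Gamma$ disjoint from $N_{f}$.

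\textbf{Convergence via transversality.} Choosing $\chi$ small enough, the cone $\Sigma(\chi u)\subset\bR^{n}\less0$ of ``bad'' frequency directions of $\chi u$ is contained in any prescribed open conic neighborhood $V$ of the fibre $\set{\eta\mid(y_{0};\eta)\in\wf u}$; this is possible by the very definition of the frequency set. By $N_{f}\cap\wf u=\emptyset$ one has ${}^{t}f'(x_{0})\eta\ne0$ on that fibre, and since $\wf u$ is closed and all sets involved are conic, after shrinking $V$ and $U$ there is $c>0$ with
\begin{align}
 \bigl\lvert {}^{t}f'(x)\,\eta\bigr\rvert\ \ge\ c\,\lvert\eta\rvert
 \qquad\text{for all } x\in U,\ \eta\in\overline V.
\end{align}
On $\overline V$ one integrates $I_{\varphi}$ by parts with the operator $L=(-i)\lvert {}^{t}f'(x)\eta\rvert^{-2}\,\bigl({}^{t}f'(x)\eta\bigr)\!\cdot\!\nabla_{x}$, which fixes $e^{i\langle f(x),\eta\rangle}$ and whose transpose lowers the power of $\lvert\eta\rvert$ by one per application on $U$; hence $\lvert I_{\varphi}(\eta)\rvert\le C_{N}\lvert\eta\rvert^{-N}$ on $\overline V$ for every $N$. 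Off $V$ the transform $\hat u=\widehat{\chi u}$ is itself rapidly decreasing while $\lvert I_{\varphi}\rvert$ is bounded. In both regimes the integrand decays faster than any power, so the integral converges absolutely, $\varphi\mapsto\langle f^{\ast}u,\varphi\rangle$ is continuous on $C_{c}^{\infty}(U)$ by the same estimates, and $f^{\ast}u\in\sD'(X)$ is well defined.

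\textbf{The wavefront bound.} To prove $\wf(f^{\ast}u)\subset f^{\ast}\wf u$, fix $(x_{0};\xi_{0})\notin f^{\ast}\wf u$, that is, $(y_{0};\eta)\notin\wf u$ whenever ${}^{t}f'(x_{0})\eta=\xi_{0}$. With a cutoff $\psi\equiv1$ near $x_{0}$ one estimates $\widehat{\psi f^{\ast}u}(\xi)=(2\pi)^{-n}\int\hat u(\eta)\,J(\eta,\xi)\,d\eta$, where $J(\eta,\xi)=\int\psi(x)\,e^{i(\langle f(x),\eta\rangle-\langle x,\xi\rangle)}\,dx$ and $\xi$ runs over a small conic neighborhood of $\xi_{0}$. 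The $x$-gradient of the phase is ${}^{t}f'(x)\eta-\xi$. For $\eta\notin V$ one combines rapid decay of $\hat u$ with a crude non-stationary-phase estimate; for $\eta\in\overline V$ the crucial point is that ${}^{t}f'(x_{0})\eta-\xi\ne0$: in the regime $\lvert\eta\rvert\sim\lvert\xi\rvert$ this is the defining property of $(x_{0};\xi_{0})$, made uniform via closedness of $\wf u$ and compactness on the unit sphere, while when one of $\lvert\eta\rvert,\lvert\xi\rvert$ dominates the other it follows from $\lvert {}^{t}f'(x)\eta\rvert\sim\lvert\eta\rvert$ of the previous step. Shrinking $\psi$ and the cone about $\xi_{0}$, the phase gradient is bounded below by a multiple of $\lvert\eta\rvert+\lvert\xi\rvert$, so integration by parts in $x$ gives $\lvert J(\eta,\xi)\rvert\le C_{N}(\lvert\eta\rvert+\lvert\xi\rvert)^{-N}$, and carrying out the $\eta$-integral yields $\lvert\widehat{\psi f^{\ast}u}(\xi)\rvert\le C_{N}\lvert\xi\rvert^{-N}$; hence $(x_{0};\xi_{0})\notin\wf(f^{\ast}u)$.

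\textbf{Main obstacle.} The repeated integrations by parts are routine; the real work lies in the uniformity statements. One must use that $\wf u$ is \emph{closed} to upgrade the pointwise transversality at $y_{0}$ to the uniform lower bound on $U\times\overline V$, and in the wavefront estimate one has to handle the two frequency variables $\eta$ and $\xi$ simultaneously, separating the genuinely microlocal regime $\lvert\eta\rvert\sim\lvert\xi\rvert$ --- controlled by $(x_{0};\xi_{0})\notin f^{\ast}\wf u$ --- from the off-diagonal regimes controlled by plain size estimates. Finally one checks that the distribution produced is independent of the choices of $\chi$, of the partition of unity on $X$, and of $\psi$, which is where the continuity/density argument of the first step is invoked.
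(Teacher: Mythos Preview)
The paper does not prove this lemma; it merely quotes it from H\"ormander (Thm.~8.2.4 in \cite{Hoermander90}) as a standard tool to be used later, with no argument supplied. Your sketch is essentially H\"ormander's own construction---localize, write $u$ by Fourier inversion, define $f^{\ast}u$ as an oscillatory integral, and control both convergence and the wavefront bound by non-stationary-phase estimates driven by the transversality hypothesis---and it is correct at the level of detail you give. There is therefore nothing to compare: you have filled in what the paper deliberately outsources.
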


The distributions that appear in quantum field theory are often 
solutions of partial differential equations. For such distributions,  
one can give bounds on their wavefront set:
\begin{lema}[Eq.~(8.1.11) and Thm.~8.3.1. in \cite{Hoermander90}]
 \label{lema:microloc_and_prop_of_sing}
 Let $u\in\sD'(X)$ for $X\subset\bR^n$ open and let 
 $P=\sum_{|\alpha|\le m} a_\alpha(x) \del^\alpha$ be a differential 
 operator of order $m$ on $X$ with smooth coefficients. Then
 \begin{align}
  \wf (Pu) \subset \wf u \subset \wf(Pu) \cup \chs P,
 \end{align}
 where the characteristic set $\chs P$ is defined  in terms of the 
 principal symbol $P_m(x,p) := \sum_{|\alpha| = m} a_\alpha(x) p^\alpha$ 
 of $P$ via 
 \begin{align}
  \chs P := \set{(x;p) \in \dot{T}^\ast(X) \;|\; P_m(x,p)=0}.
 \end{align}
 In particular, if $u$ solves $Pu=0$, then $\wf u \subset \chs P$.
\end{lema}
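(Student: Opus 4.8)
The statement is classical -- as cited, it combines the elementary pseudolocality estimate with Hörmander's microlocal elliptic regularity -- and the plan is to prove the two inclusions separately. Both reductions begin the same way: fix $(x_0;\xi_0)\in\dot{T}^\ast(X)$, choose $\phi\in C_c^\infty(X)$ equal to $1$ on a neighbourhood of $x_0$, and work with $\phi u\in\sD'$, which has compact support so that membership in $\wf$ over $x_0$ is governed by rapid decay of Fourier transforms in conic neighbourhoods of $\xi_0$ (Definition~\ref{def:singsupp_frequencyset_wf}). The commutator error $P(\phi u)-\phi\,Pu=\sum_{|\al|\le m}a_\al\big(\del^\al(\phi u)-\phi\,\del^\al u\big)$ is a sum of terms each carrying a derivative $\del^\beta\phi$ with $\beta\ne 0$, hence vanishes near $x_0$ and is irrelevant for what follows.

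For $\wf(Pu)\subset\wf u$ I argue contrapositively. If $(x_0;\xi_0)\notin\wf u$, pick the cutoff so that $\widehat{\phi u}$ decays rapidly in a closed conic neighbourhood $\Gamma$ of $\xi_0$. Writing $P(\phi u)=\sum_{|\al|\le m} a_\al\,\del^\al(\phi u)$ and treating each summand: first $\widehat{\del^\al(\phi u)}(\xi)=(i\xi)^\al\,\widehat{\phi u}(\xi)$ still decays rapidly in $\Gamma$, the polynomial prefactor being harmless; second, multiplication by the smooth coefficient $a_\al$ (trimmed by a further cutoff, legitimate because $\phi u$ has compact support) only shrinks the bad cone, since $\widehat{a_\al v}=(2\pi)^{-n}\,\widehat{a_\al}*\widehat v$ and convolving a polynomially bounded function that decays rapidly in $\Gamma$ with the Schwartz function $\widehat{a_\al}$ yields rapid decay on every slightly narrower closed subcone. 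Running over the finitely many $\al$ produces a $\psi$ with $\psi(x_0)\ne 0$ and $\widehat{\psi\,Pu}$ rapidly decaying near $\xi_0$, i.e.\ $(x_0;\xi_0)\notin\wf(Pu)$.

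For $\wf u\subset\wf(Pu)\cup\chs P$, assume $(x_0;\xi_0)\notin\wf(Pu)$ and $P_m(x_0,\xi_0)\ne 0$. By continuity and homogeneity there are a closed conic neighbourhood $\Gamma\ni\xi_0$ and $c>0$ with $|P_m(x_0,\xi)|\ge c|\xi|^m$ on $\Gamma$; since the frozen full symbol $p_0(\xi):=\sum_{|\al|\le m}a_\al(x_0)(i\xi)^\al$ has degree-$m$ part $i^m P_m(x_0,\xi)$, it satisfies $|p_0(\xi)|\ge\tfrac{c}{2}|\xi|^m$ for $\xi\in\Gamma$, $|\xi|$ large, so $1/p_0$ is defined and $O(|\xi|^{-m})$ there. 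With $\phi$ supported in a small ball about $x_0$, split $a_\al(x)=a_\al(x_0)+b_\al(x)$, $b_\al:=a_\al-a_\al(x_0)$ (again trimmed to compact support); then $\widehat{P(\phi u)}=p_0\cdot\widehat{\phi u}+R$ with $R=\sum_\al\widehat{b_\al\,\del^\al(\phi u)}$. Dividing by $p_0$ on $\Gamma$ gives, microlocally near $\xi_0$, $\widehat{\phi u}=\widehat{P(\phi u)}/p_0-R/p_0$: the first term inherits the rapid decay of $\widehat{Pu}$, while $R/p_0$ is structurally of the same type as $\widehat{\phi u}$ itself but with coefficients $b_\al$ whose sup-norm on $\supp\phi$ can be made arbitrarily small by shrinking the ball (for the top-order part of $R$), plus genuinely lower-order contributions. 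This is exactly the input for a microlocal parametrix / Neumann-type iteration, which absorbs $R/p_0$ and yields rapid decay of $\widehat{\phi u}$ near $\xi_0$, i.e.\ $(x_0;\xi_0)\notin\wf u$. I expect this last point -- turning the freezing-of-coefficients argument into a genuinely convergent iteration, with uniform control of the variable-coefficient remainder $R/p_0$ -- to be the main obstacle; in the constant-coefficient case there is nothing to iterate. (Alternatively, once pseudodifferential calculus is available one builds an order-$(-m)$ elliptic parametrix $Q$ with $QP\equiv\mathrm{Id}$ microlocally near $(x_0;\xi_0)$ and reads the inclusion off $u\equiv Q(Pu)$.) Finally, the ``in particular'' is immediate: $Pu=0$ forces $\wf(Pu)=\emptyset$, whence $\wf u\subset\chs P$.
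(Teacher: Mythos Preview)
The paper does not actually prove this lemma: it is stated as a quotation from H\"ormander's book (Eq.~(8.1.11) and Thm.~8.3.1 there) and used as a black box, alongside the other standard microlocal facts collected in Section~\ref{sec:wf_renorm_slft}. So there is no ``paper's own proof'' to compare against.

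That said, your sketch is the standard one and is essentially correct. The first inclusion is handled cleanly. For the second inclusion you have correctly identified the genuine issue: the freezing-of-coefficients step gives only a formal fixed-point equation $\widehat{\phi u}=\widehat{P(\phi u)}/p_0 - R/p_0$ microlocally on $\Gamma$, and turning this into an honest bound requires either (i) the pseudodifferential parametrix you mention in your parenthetical, or (ii) a careful inductive argument that upgrades polynomial decay of order $N$ to order $N+1$ (the remainder $R/p_0$ gains one order because $b_\al$ contributes at least one extra factor of $|\xi|^{-1}$ via the convolution estimate when $\supp\phi$ is small, or because the lower-order terms are already down by one). As written, your ``Neumann-type iteration'' paragraph is a plan rather than a proof: making the top-order part of $R/p_0$ small by shrinking $\supp\phi$ is not by itself enough to close a contraction in any obvious norm, since rapid decay is not a single norm. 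The clean route is the one you flag at the end---build $Q$ of order $-m$ with $QP=\mathrm{Id}+S$, $S$ smoothing, microlocally near $(x_0;\xi_0)$---and then invoke the first inclusion for $Q$. If you want to avoid the full calculus, H\"ormander's original argument (Thm.~8.3.1) proceeds via an explicit regularization and the method of stationary phase rather than an abstract iteration; either way some real work remains beyond your sketch.
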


We will also deal with several homogeneous distributions. 
These are automatically tempered \cite[Theorem 7.1.18.]{Hoermander90}
and the wavefront set of a homogeneous distribution is closely related 
to the wavefront set of its Fourier transform:
 
 \begin{lema}[Thm.~8.1.8. in \cite{Hoermander90}]
  \label{lema:wf_homogeneous} 
  Let $u\in\sD'(\bR^n)$ be homogeneous in $\bR^n\setminus0$. Then 
  \begin{align*}
   (x;p) \in \wf u &\Leftrightarrow (p;-x) \in \wf \hat{u}
   &&\word{if} x\ne0\word{and} p\ne0, \\
   x \in \supp u &\Leftrightarrow (0;-x) \in \wf \hat{u} 
   &&\word{if} x\ne0, \\
   p \in \supp \hat{u} &\Leftrightarrow (0;p) \in \wf u 
   &&\word{if} p\ne0.
  \end{align*}
 \end{lema}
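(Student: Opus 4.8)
The plan is to follow H\"ormander's proof of \cite[Thm.~8.1.8.]{Hoermander90}, organised around the observation that the three equivalences are tied together by Fourier inversion. First I would collect the structural facts. Since $u$ is homogeneous, of some degree $a$ on $\bR^n\less0$, it is tempered and $\hat u$ is homogeneous of degree $-n-a$; applying the pullback bound \eqref{eq:wf_of_pullback} to the dilations $x\mapsto\la x$ shows, together with the conicity of the wavefront set in its covariable, that over $x\ne0$ the set $\wf u$ is invariant under positive rescalings of the base point $x$, and likewise $\wf\hat u$ over $p\ne0$; hence it suffices to treat the three assertions with $x,p$ on the unit sphere. (As a consistency check, on $\bR^n\less0$ the distribution $u$ satisfies the Euler equation $\sum_j x_j\del_j u=au$, so Lemma~\ref{lema:microloc_and_prop_of_sing} gives $\wf u\subset\set{(x;p) \;|\; x\cdot p=0}$ away from the origin, matching the fact that the two sides of the first equivalence can both be nonempty only when $x_0\perp p_0$.) Finally, because $\widehat{\hat u}=(2\pi)^n u(-\,\cdot\,)$, the map $u\mapsto\hat u$ carries the first equivalence into itself and the second into the third, so it is enough to prove: (i) the third equivalence, and (ii) the implication $(p_0;-x_0)\notin\wf\hat u\Rightarrow(x_0;p_0)\notin\wf u$ for $x_0,p_0\ne0$.

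For (i): if $p_0\notin\supp\hat u$, then, $\supp\hat u$ being a closed cone, it meets some closed conic neighbourhood $\ovl\Ga$ of $p_0$ only at the origin; since $\widehat{\phi u}(p)=(2\pi)^{-n}\langle\hat u(\xi),\hat\phi(p-\xi)\rangle$ for $\phi\in C_c^\infty$, and $|p-\xi|$ is comparable to $|p|+|\xi|$ when $p\in\Ga$ and $\xi\in\supp\hat u$, the temperedness of $\hat u$ and the Schwartz decay of $\hat\phi$ give $\widehat{\phi u}(p)=O_N(|p|^{-N})$ on $\Ga$, hence $(0;p_0)\notin\wf u$. Conversely, if $(0;p_0)\notin\wf u$, pick $\phi\in C_c^\infty$ with $\phi(0)\ne0$ and $\widehat{\phi u}$ rapidly decreasing in a conic neighbourhood $\Ga$ of $p_0$; homogeneity of $u$ gives $\widehat{\phi(\la\,\cdot\,)u}(p)=\la^{-n-a}\widehat{\phi u}(p/\la)$, and letting $\la\downarrow0$ -- using that $\phi(\la\,\cdot\,)u\to\phi(0)u$ in $\sS'$ -- forces $\hat u$ to vanish on $\Ga$, i.e.\ $p_0\notin\supp\hat u$. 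The second equivalence is (i) applied to $\hat u$ in place of $u$.

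For (ii), normalise $|x_0|=1$, write $p=|p|\,\hat p$ with $\hat p=p/|p|$, and take $\phi\in C_c^\infty$ supported in a small neighbourhood of $x_0$ (hence away from the origin) with $\phi(x_0)\ne0$. Homogeneity of $\hat u$ then yields the scaling identity
\begin{align*}
 \widehat{\phi u}(p)=(2\pi)^{-n}\,|p|^{-a}\,\Bigl\langle\hat u(\xi),\;\hat\phi\bigl(|p|(\hat p-\xi)\bigr)\Bigr\rangle .
\end{align*}
The test function $f_p(\xi):=\hat\phi(|p|(\hat p-\xi))$ is concentrated near $\xi=\hat p$ at scale $|p|^{-1}$ and oscillates at frequency of order $|p|$; a one-line computation gives $\hat f_p(\theta)=(2\pi)^n|p|^{-n}e^{-i(\theta\cdot\hat p)}\phi(\theta/|p|)$, so $\hat f_p$ is supported in $|p|\,\supp\phi$, a thin cone about $x_0$ at frequencies of order $|p|$. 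Localising $\hat u$ by a cutoff $\sigma$ equal to $1$ near $\hat p$: the term $\langle(1-\sigma)\hat u,f_p\rangle$ is $O_N(|p|^{-N})$ because $f_p$ is Schwartz-small off $\hat p$, while $\langle\sigma\hat u,f_p\rangle$, rewritten by Parseval, is an integral of $\widehat{\sigma\hat u}$ over a thin cone about $-x_0$ at frequencies $\gtrsim|p|$ (the reflection $\theta\mapsto-\theta$ in Parseval producing the sign change), where $\widehat{\sigma\hat u}$ is rapidly decreasing precisely because $(\hat p;-x_0)\notin\wf\hat u$ -- the hypothesis, up to conicity. Hence $\widehat{\phi u}$ is rapidly decreasing in a conic neighbourhood of $p_0$, so $(x_0;p_0)\notin\wf u$; applying this implication to $\hat u$, together with $\widehat{\hat u}=(2\pi)^n u(-\,\cdot\,)$, yields the reverse implication, completing the first equivalence.

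I expect the main obstacle to be the bookkeeping in step (ii): the cutoff $f_p$ sits at the self-dual scale -- localisation width $|p|^{-1}$ against oscillation frequency $|p|$ -- so one cannot trade its $|p|$-dependent profile for a fixed cutoff, and the estimate must instead be driven by the \emph{exact} frequency support of $\hat f_p$ (a thin, high-frequency cone) together with the Schwartz decay of $\hat\phi$ in the complementary region. Matching the cone apertures, the frequency windows, and -- above all -- the signs, so that the hypothesis on $\wf\hat u$ is invoked at the codirection $-x_0$ and not $+x_0$, is the delicate point; once that is in place the bound is a routine combination of the temperedness estimate for $u$ and the defining rapid-decay property of $\wf\hat u$.
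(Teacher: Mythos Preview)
The paper does not prove this lemma at all: it is one of the ``standard statements about properties of the wavefront set'' listed in Section~\ref{sec:wf_renorm_slft} for the reader's convenience and is simply cited as Theorem~8.1.8 in H\"ormander's book, with no argument supplied. There is therefore no in-paper proof to compare your proposal against.

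Your sketch is a faithful rendering of H\"ormander's own argument and is essentially correct in outline. A couple of places would need tightening if you actually wrote it out: in the converse direction of (i) you invoke $\phi(\la\,\cdot\,)u\to\phi(0)u$ in $\sS'$, which is true but deserves one line of justification (it reduces to $\phi(\la\,\cdot\,)\psi\to\phi(0)\psi$ in $\sS$ for test functions $\psi$); and in (ii) the Parseval/reflection bookkeeping that produces the codirection $-x_0$ should be written out explicitly, since with the paper's Euclidean convention $\hat f(p)=\int e^{-ip\cdot x}f(x)\,dx$ the identity you need is $\langle v,f\rangle=(2\pi)^{-n}\langle\hat v,\hat f(-\,\cdot\,)\rangle$, and the sign enters there. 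Your final paragraph correctly identifies this as the delicate point. None of this is a genuine gap; the approach is the standard one and matches the cited source.
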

 
 \begin{remk}
  \label{remk:mink_vs_rn}
  The statements from \cite{Hoermander90} displayed in 
  this section are formulated over Euclidean space with 
  the sign convention of the Fourier transform described in 
  the end of the introduction. 
  The mentioned change of the sign convention due to physical reasons
  when working over Minkowski space implies that the covector components 
  of wavefront sets over Minkowski space get an additional sign. 
 \end{remk}
\begin{exap}
 \label{exap:product_Feynman_prop}
 We show that the wavefront set of the massless Feynman propagator 
 $D$ from Example \ref{exap:extension_Feynman_prop} is given by 
 \begin{align}
  \label{eq:wf_DF}
  \wf D = \set{ (x;\lambda x) \;|\; x^2=0,\,x\ne0,\,\lambda >0} 
  \cup \dot{T}^\ast_0.
 \end{align}
 First, we have $\dot{T}^\ast_0 \subset \wf D$ by Lemma 
 \ref{lema:microloc_and_prop_of_sing} since $D$ is a fundamental solution 
 of the wave equation and since $\wf \delta(x) = \dot{T}^\ast_0$.
 The latter wavefront set can be computed by using that 
 $\widehat{\varphi \delta}(p)=\varphi(0)$ for $\varphi\in C_c^\infty(\mink)$. When 
 $x\ne0$, $D$ is the pullback of the homogeneous distribution 
 $[t-i0]^{-1} \in \sS'(\bR)$ by the map $f:\mink\setminus0\rightarrow \bR$ 
 with $f(x) = x^2$. To verify this, note that the Fourier transform 
 of $[t\pm i0]^{-1}$ is a multiple of the Heaviside distribution $\theta(\pm\la)$ 
 and thus, by Lemma \ref{lema:wf_homogeneous}, 
 \begin{align}
 \label{eq:wf_tpm}
 \wf [t\pm i0]^{-1} = \set{(0;\la) \;|\; \la \gtrless 0}
 \end{align}
 and $N_f \cap \wf [t - i0]^{-1}= \emptyset$.
 Hence, the pullback is defined by Lemma \ref{lema:pullbacks}. 
 The wavefront set of the pullback is thus contained in 
 the righthand-side of Eq.~(\refeq{eq:wf_DF}) by Lemma 
 \ref{lema:pullbacks}, where the inverted sign of $\la$ comes from 
 the fact that we work over Minkowski space, as explained in 
 Remark \ref{remk:mink_vs_rn}. Since the wavefront set is 
 conic and the projection onto the first component yields the 
 singular support, $\wf D$ cannot be smaller than the righthand-side 
 of Eq.~(\refeq{eq:wf_DF}).
 
 Since $\lambda$ has a fixed sign, the Hörmander square of $D$ exists 
 when $x\ne0$ but because the wavefront set over $x=0$ contains any 
 direction, the square is not defined at $x=0$. 
\end{exap}
 
 Examples \ref{exap:extension_Feynman_prop} resp.~\ref{exap:product_Feynman_prop} 
 are prototypical for an extension problem in point-localized gauge theories. 
The situation becomes much more complex in string-localized field 
theories. There, the propagators are not only distributions in the variables 
$x$ and $x'$ but also in spacelike string directions $e$ and $e'$. 
The string-localization can induce new singularities to the propagator and 
moreover, the structure of these singularities depends on the formulation 
of a string-localized perturbation theory, as we shall investigate in Section 
\ref{ssc:perturbation_theory}. We will then prove in Section \ref{sec:WF} 
that in a proper setup of string-localized perturbation theory, the 
wavefront sets of string-localized propagators are actually contained 
in the wavefront sets of certain point-localized propagators. That is 
to say, the singularity structure is not worse in SLFT than it is 
in point-localized QFT despite the delocalization. 

To prove the latter statement, another standard theorem from 
microlocal analysis about partially smeared distributions 
will play a central role:
\begin{lema}[Thm.~8.2.12. in \cite{Hoermander90}]
 \label{lema:wf_kernel_smeared}
 Let $X\subset\bR^n$ and $Y\subset \bR^m$ be open and let 
 $K\in \sD'(X\times Y)$ with the corresponding linear transformation 
 $\mathcal{K}$ from $\sD(Y)$ to $\sD'(X)$, i.e., 
 \begin{align}
  [\mathcal{K}\varphi](\phi) = K(\phi\otimes\varphi).
 \end{align}
  Then  
 \begin{align}
  \wf (\mathcal{K}\varphi) \subset 
  \set{(x;p) \;|\; (x,y;p,0) \in \wf K \textup{ for some } y \in \supp \varphi}.
 \end{align}
\end{lema}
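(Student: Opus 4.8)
The plan is to reduce the claim to a decay estimate for the Fourier transform of a localised version of $\mathcal K\varphi$, and to obtain that estimate from the hypothesis by a compactness argument over $\supp\varphi$. Arguing by contraposition, fix a covector $(x_0;p_0)$, $p_0\ne0$, that does \emph{not} belong to the set on the right-hand side, i.e., $(x_0,y;p_0,0)\notin\wf K$ for every $y\in\supp\varphi$; the goal is $(x_0;p_0)\notin\wf(\mathcal K\varphi)$. Choose a cutoff $\chi\in C_c^\infty(X)$ with $\chi(x_0)\ne0$ (its support to be fixed below) and set $L:=(\chi\otimes\varphi)\,K\in\sD'(X\times Y)$. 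Since $\chi\otimes\varphi\in C_c^\infty(X\times Y)$, $L$ is compactly supported with $\supp L\subset\supp\chi\times\supp\varphi$, and multiplication by a smooth function does not enlarge the wavefront set, so $\wf L\subset\wf K$. The key reformulation is the identity
\begin{align*}
 \widehat{\chi\,\mathcal K\varphi}(p)
 =\bigl(\mathcal K\varphi\bigr)\!\bigl(\chi\,e^{-i\langle\,\cdot\,,p\rangle}\bigr)
 =K\!\bigl((\chi\,e^{-i\langle\,\cdot\,,p\rangle})\otimes\varphi\bigr)
 =L\!\bigl(e^{-i\langle\,\cdot\,,p\rangle}\otimes1\bigr)
 =\hat L(p,0),
\end{align*}
which uses the defining relation $[\mathcal K\varphi](\phi)=K(\phi\otimes\varphi)$ and the fact that $\chi\,\mathcal K\varphi$ is compactly supported, hence may be paired with the smooth, non-compactly-supported exponential. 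Thus the Fourier transform of the localised distribution $\chi\,\mathcal K\varphi$ is exactly the restriction of $\hat L$ to the subspace $\bR^n\times\{0\}$ of the momentum space $\bR^n\times\bR^m$, and it suffices to show that $\hat L(p,0)$ is rapidly decreasing as $|p|\to\infty$ inside some conic neighbourhood of $p_0$.

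The step I expect to be the main obstacle is promoting the \emph{pointwise-in-$y$} hypothesis to a \emph{uniform} one. Since $\wf K$ is closed and conic, for each $y\in\supp\varphi$ the point $(x_0,y;p_0,0)$ lies in an open set of the form $(U_y\times V_y)\times\Gamma_y$ disjoint from $\wf K$, with $U_y\ni x_0$, $V_y\ni y$ open and $\Gamma_y$ an open cone containing $(p_0,0)$. The sets $V_y$ cover the compact set $\supp\varphi$, so finitely many $V_{y_1},\dots,V_{y_N}$ already do; put $U:=\bigcap_{j}U_{y_j}$ and observe that $\bigcap_{j}\Gamma_{y_j}$ still contains $\{(p,0):p\in\Gamma'\}$ for a suitable conic neighbourhood $\Gamma'\subset\bR^n$ of $p_0$. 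Now fix the cutoff $\chi$ from above to have $\supp\chi\subset U$. Then for any $(x,y)\in\supp L\subset U\times\supp\varphi$ one has $y\in V_{y_j}$ for some $j$ and $x\in U\subset U_{y_j}$, so $(x,y;p,0)\notin\wf K\supset\wf L$ for all $p\in\Gamma'\setminus0$. Hence $\wf L$ contains no covector lying over the cone $\Gamma'\times\{0\}$, and the projection of $\wf L$ onto the covector variable is a closed cone disjoint from $\{(p,0):p\in\Gamma'\setminus0\}$.

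To finish, recall that the Fourier transform of a compactly supported distribution is rapidly decreasing in every closed cone that avoids the (conic) projection of its wavefront set onto the covector variable. Applied to $L$, this gives that $\hat L$ is rapidly decreasing in a conic neighbourhood of each $(p_0,0)$ with $p_0\in\Gamma'\setminus0$; in particular, by the identity of the first paragraph, $\widehat{\chi\,\mathcal K\varphi}(p)=\hat L(p,0)$ is rapidly decreasing as $|p|\to\infty$ inside a conic neighbourhood of $p_0$ in $\bR^n$. Since $\chi(x_0)\ne0$, this is by definition the statement that $(x_0;p_0)\notin\wf(\mathcal K\varphi)$, and the asserted inclusion follows. (Equivalently one may identify $\mathcal K\varphi$ with the pushforward of $K\cdot(1\otimes\varphi)$ along the projection $X\times Y\to X$ -- which is proper on $\supp\bigl(K\cdot(1\otimes\varphi)\bigr)$ because $\varphi$ has compact support -- and quote the wavefront-set bound for fibre integration; the direct argument above makes the mechanism transparent and uses only the basic Fourier-analytic characterisation of $\wf$ and the closedness and conicity of wavefront sets.)
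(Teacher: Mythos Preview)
The paper does not give its own proof of this lemma; it is quoted verbatim as Theorem~8.2.12 from H\"ormander and used as a black box. So there is no in-paper argument to compare against.

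Your proof is correct and is essentially the standard argument behind H\"ormander's theorem: localise with $\chi\otimes\varphi$, observe the identity $\widehat{\chi\,\mathcal K\varphi}(p)=\hat L(p,0)$, use compactness of $\supp\varphi$ together with closedness and conicity of $\wf K$ to upgrade the pointwise hypothesis to a uniform one, and then invoke the Fourier characterisation of the wavefront set for compactly supported distributions. The only place worth a brief extra word is the final step, where you assert that $\hat L$ decays rapidly in any closed cone disjoint from the covector projection of $\wf L$. This is the statement $\Sigma(L)=\pi_\xi(\wf L)$ for compactly supported $L$; it follows from a partition-of-unity argument over $\supp L$ (and the projection is indeed closed because $\supp L$ is compact and $\wf L$ is closed and conic). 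Once that is granted, the restriction to the slice $\{q=0\}$ gives rapid decay of $\hat L(p,0)$ in a cone around $p_0$, which is exactly what you need. Your parenthetical remark about the pushforward/fibre-integration viewpoint is also a valid alternative route and is how the result is often packaged in the microlocal literature.
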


\section{String-localized potentials for finite spin/helicity}
\label{sec:slf}
 There is a price to pay for the conceptual advantages of string-localized 
 fields that we have listed in the introduction. String-localized fields 
 do not only depend on the spacetime variable $x$ but also on 
 a spacelike string direction $e\in H$, where $H\subset \mink$ 
 denotes the open subset of spacelike vectors in Minkowski space. 
 In both the massless and the massive case, the string-localized 
 potential $A^{\mu_1\cdots\mu_s}(x,e)$ of the field strength 
 tensor $F_{[\mu_1\nu_1]\cdots[\mu_s\nu_s]}(x)$ for helicity respectively 
 spin $s\in\bN$ can be defined as an $s$-fold integral in string 
 direction via \cite{MundSY06,MundRS17b,MunddeO17}
 \begin{align}
  \label{eq:def_slf_all_heli}
   A_{\mu_1\cdots\mu_s}(x,e) 
   := I_e^s F_{[\mu_1\nu_1]\cdots[\mu_s\nu_s]}(x) 
      e^{\nu_1}\cdots e^{\nu_s}, 
 \end{align}
 where we have introduced the string-integration operator 
 $I_{e} X(x) := \int_0^\infty ds\, X(x+se)$. It is straightforward 
 to verify that $A_{\mu_1\cdots\mu_s}(x,e)$ is indeed a potential 
 for the field strength by exploiting the Bianchi identity for the 
 latter and $e_\mu \del^\mu I_e = -1$.
 By the causal commutation relations for $F_{[\mu_1\nu_1]\cdots[\mu_s\nu_s]}(x)$, 
 two string-localized potentials $A_{\mu_1\cdots\mu_s}(x,e)$ and 
 $A_{\ka_1\cdots\ka_s}(x',e')$ as in 
 Eq.~(\refeq{eq:def_slf_all_heli}) commute if their strings are 
 causally disjoint, i.e., 
 \begin{align}
 \label{eq:slf_CR}
  [A_{\mu_1\cdots\mu_s}(x,e),A_{\ka_1\cdots\ka_s}(x',e')] =0
  \word{if} 
  (x+se-x'-s'e')^2 <0 \;\forall s,s'\ge 0.
 \end{align}
 \begin{remk}
  \label{remk:no_timelike}
   The commutation relations (\refeq{eq:slf_CR}) are in principle 
   also meaningful for lightlike strings and indeed, lightlike 
   string variables have been considered by some \cite{GBMV18}.
   We will discuss in Appendix \ref{app:lightlike_strings}, 
   why they are no reasonable option in our context. 
   Timelike string directions are excluded if the commutation 
   relations (\refeq{eq:slf_CR}) are to remain meaningful.
 \end{remk}

 The integrations 
 in Eq.~(\refeq{eq:def_slf_all_heli}) improve the ultraviolet (UV) 
 scaling behavior of the string-localized potentials in both massive 
 and massless case: They have the same scaling behavior as a scalar 
 field for arbitrary $s\in\bN$. Mund, Schroer and Yngvason \cite{MundSY06} 
 conjectured that this improved UV behavior also has positive 
 effects on renormalizability. However, it is not a priori clear what 
 that means, for string-localized potentials as in Eq.~(\refeq{eq:def_slf_all_heli}) 
 depend not only on the spacetime variable $x$ but also on the string 
 variable $e$ and so do their propagators. Thus, before one can give 
 meaning to the notion of renormalizability, one must answer a few 
 questions: 
 \begin{enumerate}
  \item Of what nature are the products of distributions appearing in a 
        string-localized perturbation theory?
  \item What is the singularity structure of string-localized propagators?
 \end{enumerate}
 We will give answers to these questions in the following.
 
 \subsection{Distributional properties of string-integration}
 \label{ssc:string_int}
 
 In momentum space, string integration as in 
 Eq.~(\refeq{eq:def_slf_all_heli}) becomes a multiplication 
 with factor a factor $ -i [(pe) - i0]^{-1}$ since 
 \begin{align}
   \widehat{I_{e} f}(p)  
  = \int d^4x  \int_0^\infty ds\,  e^{i(p[x-se])} f(x) 
  = \hat{f}(p) \int_0^\infty ds\, e^{- is(pe) }  
  :=  \lim_{\varepsilon\downarrow0}  
  \, \frac{- i \hat{f}(p)}{(pe) - i\varepsilon}.
   \label{eq:pe_first_appearance}
 \end{align}
 Multiplication with such a factor produces additional singularities 
 when $(pe)=0$. Already when setting up their framework for SLFT, 
 Mund, Schroer and 
 Yngvason conjectured that the difficulties coming from these 
 singularities can be cured if the string-localized fields are 
 treated as distributions in both $x$ and $e$ \cite{MundSY06}: 
 ``This opens up the possibility of a perturbative, covariant, 
 implementation of interaction, where the weaker localization 
 (in space-like cones) requires new techniques but promises better 
 UV behavior.'' 
 In Section \ref{sec:WF}, we make a first step towards proving their conjecture
 by showing that the regularization of divergent loop graph amplitudes in SLFT 
 stays a pure short distance problem and that hence the UV scaling 
 behavior remains a meaningful notion.
 
 Let us start our investigations by characterizing the new singularities 
 in detail for general directions $e\in\mink$.  
 \begin{lema}
 \label{lema:Upm_welldef_wf}
  The expressions $U_\pm(p,e) := [(pe) \pm i0]^{-1}$ are tempered 
  distributions on $(\mink)^2$ with 
  \begin{align}\label{eq:wf_upm}
   \wf U_\pm = \set{ (p,e;\la e,\la p) \;|\; 
                \la \lessgtr 0,\; (pe)=0,\;(p,e)\ne (0,0) } 
                \cup \dot{T}^\ast_{(0,0)},
  \end{align}
  where $\dot{T}^\ast_{(0,0)}$ is the cotangent space at $(p,e)=(0,0)$ 
  deprived of the zero-covector.
 \end{lema}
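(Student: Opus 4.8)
The plan is to treat $U_\pm(p,e) = [(pe)\pm i0]^{-1}$ as the pullback of the one-dimensional distribution $[t \pm i0]^{-1} \in \sS'(\bR)$ under the smooth bilinear map $B : (\mink)^2 \to \bR$, $B(p,e) := (pe) = \eta_{\mu\nu}p^\mu e^\nu$, and then apply Lemma \ref{lema:pullbacks} together with Lemma \ref{lema:wf_homogeneous}, exactly as was done for the Feynman propagator in Example \ref{exap:product_Feynman_prop}. First I would record that $\wf [t\pm i0]^{-1} = \set{(0;\la) \mid \la \gtrless 0}$, as already noted in Eq.~(\refeq{eq:wf_tpm}), since the Fourier transform of $[t\pm i0]^{-1}$ is a multiple of $\theta(\pm\la)$, whose only singularity is at $t=0$ with both covector directions, cut down to the half-line $\la\gtrless0$ by Lemma \ref{lema:wf_homogeneous}. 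The differential of $B$ at $(p,e)$ is the linear map $(\dot p,\dot e)\mapsto (p\dot e)+(e\dot p)$, so its transpose sends $\la\in\bR$ to the covector $(\la e,\la p)\in(\mink)^2$. Hence the set of normals is $N_B = \set{((pe);\la e,\la p) \mid \la\in\bR,\ (e,p)\neq(0,0)}$ together with the fiber over $(p,e)=(0,0)$; the transpose is injective precisely when $(p,e)\neq(0,0)$, and it never hits the singular point $t=0$ of $[t\pm i0]^{-1}$ with a nonzero covector unless $(pe)=0$. Away from the origin of $(\mink)^2$ this shows $N_B \cap \wf[t\pm i0]^{-1} = \emptyset$, so the pullback $B^\ast [t\pm i0]^{-1} = U_\pm$ is well defined there by Lemma \ref{lema:pullbacks}.

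Next I would deduce the wavefront bound. Lemma \ref{lema:pullbacks} gives $\wf U_\pm \subset B^\ast \wf[t\pm i0]^{-1} = \set{(p,e;\la e,\la p) \mid \la \gtrless 0,\ (pe)=0,\ (p,e)\neq(0,0)}$ over the region $(p,e)\neq(0,0)$, where the sign of $\la$ is inverted relative to the Euclidean convention because we work over Minkowski space with the physics sign convention in the Fourier transform (Remark \ref{remk:mink_vs_rn}); this accounts for the $\la\lessgtr0$ in Eq.~(\refeq{eq:wf_upm}). For the reverse inclusion over this region, one argues as in Example \ref{exap:product_Feynman_prop}: $U_\pm$ is genuinely singular on the set $\set{(pe)=0}$ (its restriction there cannot be smooth, as seen e.g.\ from the jump in the imaginary part), so the projection of $\wf U_\pm$ onto the base must contain $\set{(pe)=0,\,(p,e)\neq(0,0)}$; since $\wf U_\pm$ is conic and contained in the single ray $\set{(p,e;\la e,\la p):\la\lessgtr0}$ over each such base point, it must equal that ray. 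Temperedness of $U_\pm$ on all of $(\mink)^2$ is immediate: $[(pe)\pm i0]^{-1}$ is locally integrable away from a neighborhood of the origin and has at most polynomial growth, or alternatively one invokes Hörmander's Theorem 7.1.18 since $U_\pm$ is homogeneous of degree $-1$ in $(p,e)$.

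It remains to handle the fiber over the origin $(p,e)=(0,0)$, which is the main obstacle since there the pullback argument breaks down: the differential of $B$ vanishes at $(0,0)$, so every covector is a normal and Lemma \ref{lema:pullbacks} says nothing. I would instead argue directly that $\dot T^\ast_{(0,0)} \subset \wf U_\pm$ by a homogeneity argument analogous to the one that put $\dot T^\ast_0$ into $\wf D$: since $U_\pm$ is homogeneous of degree $-1$, its Fourier transform (over the eight-dimensional space) is homogeneous of degree $-7$, and I would compute enough of the structure of $\widehat{U_\pm}$, or invoke the third line of Lemma \ref{lema:wf_homogeneous}, to see that $\supp \widehat{U_\pm}$ is not contained in any proper cone — the set $\set{(pe)=0}$ is not a linear subspace and is mapped by Fourier transform to a spread-out set — so that every direction appears over the origin. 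For the opposite inclusion at the origin there is nothing to prove, as $\dot T^\ast_{(0,0)}$ is already the full fiber. Assembling the contribution over $(p,e)\neq(0,0)$ with the full fiber over $(0,0)$ yields exactly Eq.~(\refeq{eq:wf_upm}), and closedness of the resulting set is checked directly (the only accumulation of the conic rays as $(p,e)\to(0,0)$ lands inside $\dot T^\ast_{(0,0)}$), consistent with the wavefront set always being closed.
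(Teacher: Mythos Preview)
Your treatment away from the origin $(p,e)=(0,0)$ is essentially identical to the paper's: pullback of $[t\pm i0]^{-1}$ along $B(p,e)=(pe)$, Lemma~\ref{lema:pullbacks} for the inclusion, the sign flip via Remark~\ref{remk:mink_vs_rn}, and the conic/singular-support argument for equality. Two small slips there: $U_\pm$ is homogeneous of degree $-2$ (not $-1$) in $(p,e)$ jointly, since $(\lambda p)(\lambda e)=\lambda^2(pe)$; and your description of $N_B$ is garbled --- $N_B$ is a subset of $T^\ast\bR$, so its elements are $((pe);\la)$ with ${}^tB'(p,e)\la=(\la e,\la p)=0$, i.e.\ $\la=0$ when $(p,e)\neq(0,0)$, exactly as the paper writes. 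Neither slip is fatal.

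Where your argument genuinely diverges from the paper --- and is left incomplete --- is at $(p,e)=(0,0)$. You want to invoke the third line of Lemma~\ref{lema:wf_homogeneous} and conclude $\dot T^\ast_{(0,0)}\subset\wf U_\pm$ from $\supp\widehat{U_\pm}=(\mink)^2$. This is a valid strategy, but the justification you offer (``$\{(pe)=0\}$ is not a linear subspace and is mapped by Fourier transform to a spread-out set'') is not an argument: the singular support of $U_\pm$ being a quadric tells you nothing direct about the \emph{support} of $\widehat{U_\pm}$. To actually carry this out you would need to compute $\widehat{U_\pm}$, which on $\bR^8$ is (by H\"ormander's Theorem~6.2.1 for non-degenerate quadratic forms) a constant times $[A^{-1}(x,\xi)\mp i0]^{-3}$, indeed supported everywhere. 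The paper instead uses the same Theorem~6.2.1 more economically: it observes that $(pe)=A(p,e)$ for the non-degenerate form $A=\tfrac12\bigl(\begin{smallmatrix}0&\eta\\\eta&0\end{smallmatrix}\bigr)$, so $(\partial_p\partial_e)^3 U_\pm$ is a nonzero multiple of $\delta(p,e)$, and then Lemma~\ref{lema:microloc_and_prop_of_sing} gives $\dot T^\ast_{(0,0)}=\wf\delta\subset\wf U_\pm$ directly. Both routes ultimately rest on the quadratic-form machinery, but the paper's avoids computing the full eight-dimensional Fourier transform.
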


 \begin{proof}
  First note that if $U_\pm$ are well-defined distributions, they are 
  also tempered because they are homogeneous. When $(p,e)\ne(0,0)$, 
  $U_\pm$ are the pullbacks of the distributions 
  $[t \pm i0]^{-1} \in \sS^\prime(\mathbb{R})$ by the map 
  $f: (\mink)^2 \setminus (0,0) \rightarrow \bR$, $f(p,e) = (pe)$
  with set of normals 
  \begin{align}
  N_f = \set{((pe); \la) \in \bR^2 \;|\; 
   \la e = \la p=0,\;(p,e)\ne (0,0) } 
   = \set{(t; 0) \;|\; t \in \bR }
   \end{align}
   so that $N_f\cap \wf [t \pm i0]^{-1}=\emptyset$. Thus, by Lemma
   \ref{lema:pullbacks}, Remark \ref{remk:mink_vs_rn} and the form 
   of $\wf [t \pm i0]^{-1}$ given in Eq.~(\refeq{eq:wf_tpm}), we 
   have
  \begin{align}\label{eq:wf_upm_outside_origin}
   \begin{split}
   \wf \left. U_\pm \right|_{(p,e)\ne(0,0)}
   \subset f^\ast  \wf [t \pm i0]^{-1}
   = \set{ (p,e;\la e,\la p) \;|\; (pe)=0,\; 
            \la \lessgtr 0}.
  \end{split}
  \end{align}
  Eq.~(\ref{eq:wf_upm_outside_origin}) must actually be an equality 
  since the wavefront set is conic and the projection onto the first 
  component must yield the singular support. 
  Since $U_\pm$ are locally integrable at $(p,e)=(0,0)$, we have 
  established their existence as tempered distributions. 
  
  It remains to show that the wavefront set over $(p,e)=(0,0)$ is the 
  whole cotangent space (deprived of the zero-covector). To do so, we 
  introduce the bilinear form 
  \begin{align}
   A := \frac{1}{2} \begin{pmatrix}
         0 & \eta \\
        \eta & 0
        \end{pmatrix}
  \end{align}
  on $(\mink)^2$ such that $A(p,e) = (pe)$ and $4 A^2 = \mathbb{I}$. 
  By \cite[Theorem 6.2.1.]{Hoermander90}, 
  \begin{align}
   (\partial_p \partial_e) \left[ (pe) \pm i0 \right]^{-3} 
   = a_\pm \delta(p,e),
  \end{align}
  where $a_\pm$ are non-vanishing constants that are 
  unimportant for the following arguments. Moreover, we have 
  \begin{align}
   (\partial_p \partial_e)^2 U_\pm  
   = 4 \left[ (pe) \pm i0 \right]^{-3} 
   \quad\Rightarrow\quad 
   (\partial_p \partial_e)^3 U_\pm = 4  a_\pm \delta(p,e).
  \end{align}
  Consequently $\wf{\delta(p,e)} = \dot{T}^\ast_{(0,0)} 
  \subset \wf{U_\pm}$ by Lemma \ref{lema:microloc_and_prop_of_sing} 
  and the proof is completed.  
 \end{proof}
 The distributions $U_\pm$ in Lemma \ref{lema:Upm_welldef_wf} depend 
 on a general string direction $e\in\mink$.  In SLFT, however, the 
 string directions are usually restricted to a set of spacelike 
 (or lightlike) directions. Within our derivations, they are elements of 
 the \textit{open} subset $H\subset\mink$ of spacelike directions, as explained 
 in the beginning of the current section. The restriction of a distribution 
 to an open subset always exists and it follows immediately from  
 Definition \ref{def:singsupp_frequencyset_wf} that the wavefront set 
 of the restricted distribution is the restriction of the wavefront 
 set. We therefore define:
 \begin{defn}
  \label{defn:u_pm}
  Let $u_\pm(p,e) := U_\pm(p,e)|_{\mink\times H}$ denote 
  the restriction of the distributions $U_\pm$ over $(\mink)^2$ 
  from Lemma \ref{lema:Upm_welldef_wf} to the open subset $\mink\times H$ 
  of spacelike string directions with 
   \begin{align}
  \wf u_\pm 
  = \set{ (p,e;x,\xi)\;|\; (p,e;x,\xi) \in \wf U_\pm,\; e\in H }
 \end{align}
 by definition of the wavefront set.
 \end{defn}
  Lemma \ref{lema:Upm_welldef_wf} has the following important consequence for the restricted distributions $u_\pm$.
 \begin{corl}\label{corl:powers_upm}
  Hörmander products $(u_+)^k$ and 
  $(u_-)^k$ of the restrictions to spacelike string 
  variables do exist for arbitrary $k\in\bN$, but the Hörmander product 
  $u_+ \cdot u_-$ with opposite imaginary shift 
  does not exist. Moreover, 
  \begin{align}
   \wf\left[ (u_\pm)^k\right] = \wf u_\pm.
  \end{align}
 \end{corl}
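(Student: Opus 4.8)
The plan is to exploit the explicit, \emph{sign-definite} form of $\wf u_\pm$ from Definition \ref{defn:u_pm}: after restricting the string direction to the open set $H$, the term $\dot{T}^\ast_{(0,0)}$ drops out (since $0\notin H$), and what remains are only the covectors $(p,e;\la e,\la p)$ over points with $(pe)=0$, with $\la$ ranging over one \emph{fixed} sign --- negative for one choice of $\pm$ and positive for the other. Moreover every such covector is genuinely nonzero, since $e\in H$ is spacelike and hence $e\neq0$. With this in hand, all three assertions reduce to bookkeeping with the criterion \eqref{eq:hoermander_krit} and the bound \eqref{eq:bound_wf_product}.

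For existence I would induct on $k$, the case $k=1$ being trivial. Assuming $(u_\pm)^{k-1}$ exists with $\wf[(u_\pm)^{k-1}]\subset\wf u_\pm$, set $(u_\pm)^k:=u_\pm\cdot(u_\pm)^{k-1}$. If $(p,e;\xi)\in\wf u_\pm$ then $\xi=\la(e,p)$ with $\la$ of the fixed sign, so $-\xi=(-\la)(e,p)$ carries the opposite sign and therefore lies in neither $\wf u_\pm$ nor $\wf[(u_\pm)^{k-1}]$; thus \eqref{eq:hoermander_krit} holds and the product is defined. Inserting this into \eqref{eq:bound_wf_product}, a covector of $\wf[(u_\pm)^k]$ is a sum of two covectors, each either $0$ or of the form $\la_j(e,p)$ with $\la_j$ of the fixed sign; such a sum is again $\la(e,p)$ with $\la$ of that sign (or zero, which the wavefront set excludes), and it can occur only over points with $(pe)=0$. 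Hence $\wf[(u_\pm)^k]\subset\wf u_\pm$, which closes the induction. The same computation applied to the pair $u_+,u_-$ gives the opposite conclusion: over any point with $(pe)=0$ and $e\in H$ --- and such points exist --- a nonzero covector $\xi$ of the sign realised in $\wf u_+$ lies in $\wf u_+$ while $-\xi$ lies in $\wf u_-$, so \eqref{eq:hoermander_krit} is violated and the Hörmander product $u_+\cdot u_-$ is not defined.

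It remains to upgrade $\wf[(u_\pm)^k]\subset\wf u_\pm$ to an equality. The projection of $\wf u_\pm$ onto the base is the whole surface $\{(pe)=0\}\cap(\mink\times H)$, and over each of its points $\wf u_\pm$ is a single ray (here $e\neq0$ guarantees $(e,p)\neq0$); so by conicity it suffices to know that $\singsupp[(u_\pm)^k]$ contains every such point. For this I would note that on the open set $\{(pe)\neq0\}$ the Hörmander product coincides with the ordinary product of the smooth functions $u_\pm=(pe)^{-1}$, so $(u_\pm)^k=(pe)^{-k}$ there; since $(pe)^{-k}$ is unbounded near every point of $\{(pe)=0\}$, it has no continuous --- a fortiori no smooth --- extension across such a point, whence $(u_\pm)^k$ is singular there. (Equivalently, one may identify $(u_\pm)^k$ with the pullback of $[t\pm i0]^{-k}$ along $f(p,e)=(pe)$ and read off its wavefront set exactly as in the proof of Lemma \ref{lema:Upm_welldef_wf}.) The only point requiring this little extra argument --- and the one I would flag as the main obstacle --- is precisely the passage from inclusion to equality: one has to rule out that the product becomes \emph{smoother} on the characteristic surface $\{(pe)=0\}$, everything else being a mechanical application of \eqref{eq:bound_wf_product} and \eqref{eq:hoermander_krit} to a wavefront set all of whose covectors point in a single direction.
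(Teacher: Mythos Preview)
Your proposal is correct and follows essentially the same route as the paper: restrict to $e\in H$ so that $\dot T^\ast_{(0,0)}$ disappears and $\wf u_\pm$ becomes the single-sign ray set \eqref{eq:wf_upm_spacelike}, then apply \eqref{eq:hoermander_krit} and \eqref{eq:bound_wf_product} inductively, and finally use conicity together with the fact that $\wf u_\pm$ is a single ray over each point of $\{(pe)=0\}$ to upgrade the inclusion to an equality. The paper compresses the last step into the sentence ``both sides must be equal since the wavefront set is conic and the projection onto the first component must yield the singular support''; your explicit verification that $\singsupp[(u_\pm)^k]=\{(pe)=0\}$ via unboundedness of $(pe)^{-k}$ (or equivalently via the pullback of $[t\pm i0]^{-k}$) just spells out what the paper leaves implicit.
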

\begin{proof}
  If $e\in H$, then $(p,e)\ne (0,0)$ and 
 \begin{align}
 \label{eq:wf_upm_spacelike}
  \wf u_\pm= \set{ (p,e;\la e,\la p) \;|\; e \in H, 
  \;\la \lessgtr 0,\; (pe)=0}.
 \end{align}
 The Hörmander product of two distributions exists if 
 Eq.~(\refeq{eq:hoermander_krit}) is satisfied. Since the sign of 
 $\la$ in Eq.~(\refeq{eq:wf_upm_spacelike}) is fixed by 
 the sign of the imaginary shift, $(u_\pm)^2$ 
 are defined but $u_+ \cdot u_-$ is not.
 It also follows immediately from the shape of 
 $\wf u_\pm$ and Eq.~(\refeq{eq:bound_wf_product}) that 
 \begin{align}
  \wf \left[ (u_\pm)^2 \right]\subset \wf u_\pm
 \end{align}
 and both sides must be equal since the wavefront set is conic and the 
 projection onto the first component must yield the singular support. 
 By induction, we get the statement for arbitrary powers.
\end{proof}

 \begin{remk}
  \label{remk:closed_spacelike}
   In the literature, the string variables are usually considered as 
   elements of the \textit{closed} subset $H_{-1}\subset\mink$ of 
   spacelike vectors with Minkowski square $e^2 = -1$ 
   (as for example in \cite{MundRS17b,GGBM21}). The restriction 
   of a distribution to a closed subset is much more involved than 
   the restriction to an open subset. It does not always exist and 
   even if it does, it may affect the form of the wavefront set 
   \cite{Hoermander90}. We will briefly sketch in Appendix 
   \ref{app:H_minus_1} why the restriction to $H_{-1}$ is indeed 
   unproblematic. For our purposes, however, the simpler case of the 
   restriction to the open subset $H$ is sufficient. 
 \end{remk}

Lemma \ref{lema:Upm_welldef_wf} and Corollary \ref{corl:powers_upm} 
are the starting point for the full analysis of the singularities 
of string-localized propagators that we will subsequently perform.

\subsection{String-localized propagators for all spins and helicities}
\label{ssc:slf_props}
 For two arbitrary point-localized fields $X(x)$ and $X'(x')$ of mass 
 $m\ge0$, we introduce the notation 
 \begin{align}
  \label{eq:generic_2pf_pl}
  \vev{X(x) X'(x')} := \int d\mu_m(p)\, e^{-i(p(x-x'))} \,{_mM}^{X,X'}(p)
 \end{align}
 for the two-point function of $X$ and $X'$, with the measure 
 $d\mu_m(p) = \frac{d^4p}{(2\pi)^3} \delta(p^2-m^2) \theta(p^0)$  
 on the mass shell and where ${_mM}^{X,X'}(p)$ is a polynomial 
 in $p$. Furthermore, we write 
 \begin{align}
  \label{eq:kinematic_prop_pl}
  \vev{T_0 X(x) X'(x')} := \int \frac{d^4p}{(2\pi)^4}\, 
  \frac{e^{-i(p(x-x'))}}{p^2-m^2 +i0} \,{_mM}^{X,X'}(p)
 \end{align}
 for the corresponding \textit{kinematic} propagator. Using translation 
 invariance $x-x'\rightarrow x$, we sometimes also use the notation 
 \begin{align}
  \vev{X X'}(x) \word{resp.} \vev{T_0 X X'}(x).
 \end{align}

 The kinematic propagator from Eq.~(\refeq{eq:kinematic_prop_pl}) 
 is in general only a specific choice for a propagator since the transition 
 from Eq.~(\refeq{eq:generic_2pf_pl}) to Eq.~(\refeq{eq:kinematic_prop_pl}) 
 might be non-unique: 
 Dependent on the scaling behavior of 
 ${_mM}^{X,X'}(p)$, there can arise ambiguities in the definition of 
 time-ordering at $x=x'$ \cite{EpsteinGlaser73,ScharfLast}. 
 Any other propagator can only differ from 
 Eq.~(\refeq{eq:kinematic_prop_pl}) by a linear combination 
 \begin{align}
  \sum_{|\alpha|\le n} b_\alpha \del^\alpha \delta(x-x'),
  \label{eq:freedom_T_pl_xspace}
 \end{align}
 where $\alpha$ is a 
 multi-index, $b_\alpha$ are constants and $n \in \bN_0$ is restricted 
 by the scaling behavior in a similar manner to the restrictions 
 from power counting displayed in Example \ref{exap:extension_Feynman_prop}. The ambiguity 
 (\refeq{eq:freedom_T_pl_xspace}) can be understood 
 more easily by a momentum space consideration. Adding a term 
 \begin{align}
 \label{eq:freedom_T_pl_pspace}
  (p^2-m^2) \tilde{M}(p)
 \end{align}
 to ${_mM}^{X,X'}(p)$, where $\tilde{M}(p)$ is another polynomial, 
 does not contribute to the two-point function 
 (\refeq{eq:generic_2pf_pl}) but yields a contribution of the form 
 (\refeq{eq:freedom_T_pl_xspace}) to the propagator 
 (\refeq{eq:kinematic_prop_pl}).
 
 \begin{remk}
  We will frequently refer to the expression ${_mM}^{X,X'}(p)$ as 
  \textit{kernel} of a propagator or two-point function and hope 
  that this usage does not cause confusion with distribution kernels 
  that will be used implicitly in Section \ref{sec:WF}.
 \end{remk}

  String-integrating $X(x)$ in Eq.~(\refeq{eq:generic_2pf_pl}) gives an 
 additional factor $-iu_-(p,e)$ in momentum space, while string 
 integrating $X'(x')$ yields a factor $iu_+(p,e')=-iu_-(p,-e')$. 
 A natural choice of a propagator involving a string-integrated field 
 is thus given by inserting the appropriate powers of $-iu_-(p,e)$ and 
 $iu_+(p,e')$ into Eq.~(\refeq{eq:kinematic_prop_pl}). Again, the 
 propagator might not be unique but two propagators can at most differ 
 by a linear combination of string-integrated Dirac deltas. 
 We will investigate these ambiguities in Section 
 \ref{ssc:non_kinematic_propagators}. For now, we prove the 
 well-definedness of the relevant class of momentum space 
 representations of string-localized \textit{kinematic} propagators.\footnote{There 
 are similar statements for the two-point functions 
 but in this paper, we are interested in scattering theory only. 
 Therefore, we only consider propagators. The interested 
 reader may carry through the existence proof for the two-point 
 functions as an exercise, using our findings as a guide.}

 \begin{lema}[massless case]
  \label{lema:well_def_SI_2pf_and_prop_massless}
  Let $m=0$ and let $M_\times(p)$ be a polynomial in $p$ such that 
  $\omega\in\mathbb{N}_0$ is the smallest power of $p$ appearing in 
  $M_\times$, where the subscript $\times$ is a 
  placeholder for possible Lorentz indices. Let further 
  $k,k' \in \mathbb{N}_0$ and $\omega-k-k'-2>-4$. Then the expression
  \begin{align}
  \label{eq:massless_kernel_generic}
  \frac{[u_-(p,e)]^k [u_+(p,e')]^{k'} M_\times(p)}{p^2+i0}
  \end{align}
  is a well-defined (and possibly tensor-valued) distribution on 
  $\mink\times H^2$.
 \end{lema}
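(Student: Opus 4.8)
The plan is to treat the expression (\refeq{eq:massless_kernel_generic}) as a product of three distributions whose wavefront sets are known, apply Hörmander's criterion (\refeq{eq:hoermander_krit}) to guarantee that all the products exist, and then separately check local integrability at the potentially dangerous point $p=0$ where the power-counting hypothesis $\omega-k-k'-2>-4$ enters. Concretely, I would first record that $M_\times(p)$ is smooth, so it has empty wavefront set and multiplying by it never obstructs a Hörmander product; it only affects the scaling near $p=0$. The two genuinely singular factors are $[u_-(p,e)]^k$ and $[u_+(p,e')]^{k'}$ on the one hand and $[p^2+i0]^{-1}$ on the other.

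Next I would assemble the wavefront-set information. By Corollary \ref{corl:powers_upm}, $\wf[(u_-)^k]=\wf u_-$ and $\wf[(u_+)^{k'}]=\wf u_+$, with the explicit forms from Eq.~(\refeq{eq:wf_upm_spacelike}); crucially the covector $\la$ has a fixed sign ($\la<0$ for $u_-$, $\la>0$ for $u_+$). Since the first factor depends on $(p,e)$ and the second on $(p,e')$, on $\mink\times H^2$ their wavefront sets live over disjoint sets of the $e,e'$ slots, so the only way Hörmander's criterion could fail between them is through the common $p$-direction; but a covector of the form $(\la e, 0, 0)$ paired against one of the form $(\la' e', 0, 0)$ can only cancel if $\la e=-\la' e'$ with $e,e'\in H$, which is consistent, yet the remaining components $\la p$ in the $e$-slot and $\la' p$ in the $e'$-slot would then have to vanish separately — forcing $\la p=0$ and $\la' p=0$, i.e.\ $p=0$, which is excluded on $\wf u_\pm$ since $(p,e)\neq(0,0)$ does not prevent $p=0$, so I must be slightly more careful here and argue via the $e,e'$ components being in independent slots of $(\mink)^2\times(\mink)^2$: a nonzero covector supported in the $e$-coordinates cannot be the negative of one supported in the $e'$-coordinates. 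Hence the product $[(u_-)^k][(u_+)^{k'}]$ exists, with wavefront set bounded via (\refeq{eq:bound_wf_product}) by a set all of whose covectors have a nonzero component in the $e$ or $e'$ directions (or lie over the origin). The Feynman denominator $[p^2+i0]^{-1}$ is the pullback of $[t+i0]^{-1}$ as in Example \ref{exap:product_Feynman_prop} (viewed now as a distribution on $\mink\times H^2$ that is constant in $e,e'$), so its wavefront set consists of covectors of the form $(\la p, 0, 0)$ supported entirely in the $p$-slot, together with $\dot T^\ast$ over $\set{p=0}$. Comparing: away from $p=0$, a covector from $\wf[(u_-)^k(u_+)^{k'}]$ has a nonzero $e$- or $e'$-component while a covector from $\wf[p^2+i0]^{-1}$ has only a $p$-component, so they can never be opposite, and Hörmander's criterion (\refeq{eq:hoermander_krit}) is satisfied on $(\mink\times H^2)\setminus\set{p=0}$; the product is therefore a well-defined distribution there.

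It remains to extend across $\set{p=0}$, and this is where I expect the main work to lie. Restricted to a neighborhood of a point with $p=0$, the factors $u_\pm$ are locally $L^1$ in $p$ (as used in the proof of Lemma \ref{lema:Upm_welldef_wf}), and $[p^2+i0]^{-1}$ is locally $L^1$ in $\mink$; but their product need not be, so one cannot simply invoke local integrability of each factor. Instead I would estimate directly: near $p=0$ the modulus of (\refeq{eq:massless_kernel_generic}) is bounded, uniformly for $e,e'$ in a compact subset of $H$, by $C\,|p|^{\omega}\cdot|p|^{-k}\cdot|p|^{-k'}\cdot|p|^{-2}=C\,|p|^{\,\omega-k-k'-2}$ — using that $|(pe)\pm i0|^{-1}\le |p|^{-1}$ up to a constant when $e$ ranges over a compact set of spacelike directions bounded away from the lightcone in the relevant sense, and that $|p^2+i0|^{-1}\le |p|^{-2}$. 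Since the integration in $p$ is four-dimensional, $|p|^{\,\omega-k-k'-2}$ is locally integrable precisely when $\omega-k-k'-2>-4$, which is exactly the hypothesis. This $L^1_{\mathrm{loc}}$ bound near $p=0$ shows the singular support does not reach $\set{p=0}$ in a non-integrable way, so (\refeq{eq:massless_kernel_generic}) defines a distribution on all of $\mink\times H^2$ by gluing the distribution already constructed on the complement of $\set{p=0}$ with the $L^1_{\mathrm{loc}}$ representation near $\set{p=0}$; a standard partition-of-unity argument in the $p$-variable finishes the proof. The delicate point to get right is the uniformity of the bound $|(pe)+i0|^{-1}\lesssim|p|^{-1}$ over the $e$'s and the handling of the imaginary shift, which should be done by the same pullback-of-$[t\pm i0]^{-1}$ reasoning used throughout Section~\ref{ssc:string_int} rather than by a naive pointwise inequality.
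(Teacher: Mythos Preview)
Your proposal is correct and follows essentially the same route as the paper: verify H\"ormander's criterion for the triple product on $(\mink\setminus0)\times H^2$ by comparing the wavefront sets of $(u_-)^k$, $(u_+)^{k'}$ and $[p^2+i0]^{-1}$ (promoted to $\mink\times H^2$), and then invoke the power-counting hypothesis $\omega-k-k'-2>-4$ to extend across $p=0$ by local integrability. The paper carries this out slightly more tersely---it lists the three wavefront sets and observes that their covector parts cannot sum to zero for $p\neq 0$, then states local integrability at $p=0$ without the explicit scaling estimate you sketch; your final caveat about the pointwise bound $|(pe)\pm i0|^{-1}\lesssim |p|^{-1}$ being heuristic is well taken, and indeed the clean way to phrase the $p=0$ step is via homogeneity in $p$ of degree $>-4$ rather than a naive $L^1$ bound.
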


 \begin{proof}
  By Corollary \ref{corl:powers_upm}, the powers $[u_-(p,e)]^k$ and 
  $[u_+(p,e')]^{k'}$ exist on $\mink\times H$ and their wavefront set 
  is given by Eq.~(\refeq{eq:wf_upm_spacelike}). We promote them 
  to distributions on $\mink\times H^2$ by tensoring with the 
  constant distribution in the missing string variable, so that 
  \begin{subequations}
  \begin{align}
  \label{eq:wf_upm_otimes_prime}
   \wf \left( [u_-(p,e)]^k \otimes 1_{e'} \right)
   &= \set{ (p,e,e';\la e,\la p,0) \;|\; \la >0 ,\; (pe)=0},
   \\
   \label{eq:wf_upm_otimes}
   \wf \left( [u_+(p,e')]^{k'} \otimes 1_e \right)
   &= \set{ (p,e,e';\ka e',0, \ka p) \;|\; \ka <0 ,\; (pe')=0}.
  \end{align}
  \end{subequations}
  By Eq.~(\refeq{eq:wf_DF}) and Lemma \ref{lema:wf_homogeneous}, 
  \begin{align}
       \wf \left([p^2+i0]^{-1} \otimes 1_{e,e'}\right) 
    = \; &\set{ (p,e,e';\la p,0,0) \;|\; p^2=0,\; p\ne0,\;\la<0 } 
      \notag \\      \cup 
      &\set{(0,e,e';x,0,0)  \;|\; x\in \mink\setminus0}.
      \label{eq:wf_p2_inverse}
  \end{align}
  Hence, the covector components of the three wavefront sets
  (\refeq{eq:wf_upm_otimes_prime}), (\refeq{eq:wf_upm_otimes}) and 
  (\refeq{eq:wf_p2_inverse}) cannot add up to zero when $p\ne0$ 
  and thus the Hörmander product exists on 
  $(\mink\setminus0)\times H^2$. The requirement that $\omega-k-k'-2>-4$ 
  ensures that (\refeq{eq:massless_kernel_generic}) is locally 
  integrable at $p=0$. Therefore it is a well-defined distribution 
  on $\mink\times H^2$.
 \end{proof}
  
  Lemma \ref{lema:well_def_SI_2pf_and_prop_massless} has an 
  analogue for $m>0$, where a weaker constraint on the smallest 
  power of $p$ appearing in the polynomial $M_\times(p)$ is 
  sufficient to guarantee local integrability at $p=0$ because  
  the denominator $[p^2-m^2+i0]^{-1}$ has a better behaved 
  wavefront set for $m>0$ than for $m=0$.
  
  \begin{lema}[massive case]
  \label{lema:well_def_SI_2pf_and_prop_massive}
  Let $m>0$ and let $M_\times(p)$ be a polynomial in $p$ such that 
  $\omega\in\mathbb{N}_0$ is the smallest power of $p$ appearing in 
  $M_\times$, where the subscript $\times$ 
  is a placeholder for possible Lorentz indices. Let further 
  $k,k' \in \mathbb{N}_0$ and $\omega-k-k'>-4$. Then the 
  expression
  \begin{align}
  \label{eq:massive_kernel_generic}
  \frac{[u_-(p,e)]^k [u_+(p,e')]^{k'} M_\times(p)}{p^2-m^2+i0}
  \end{align}
  is a well-defined (and possibly tensor-valued) distribution on $\mink\times H^2$.
 \end{lema}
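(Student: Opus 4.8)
The plan is to run the proof of Lemma~\ref{lema:well_def_SI_2pf_and_prop_massless} almost verbatim; the only substantive change is the microlocal analysis of the denominator near $p=0$, which for $m>0$ is harmless there and thereby relaxes the power-counting bound. First I would invoke Corollary~\ref{corl:powers_upm} to obtain the H\"ormander powers $[u_-(p,e)]^k$ and $[u_+(p,e')]^{k'}$ on $\mink\times H$ and tensor each with the constant distribution in the missing string variable, so that they become distributions on $\mink\times H^2$ with wavefront sets exactly as in Eqs.~\eqref{eq:wf_upm_otimes_prime} and \eqref{eq:wf_upm_otimes} -- these do not depend on $m$. Multiplication by the polynomial $M_\times(p)$ is smooth, hence harmless, and the Lorentz indices hidden in the subscript $\times$ come along componentwise, so it suffices to treat a scalar kernel.

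The one new computation is the wavefront set of $[p^2-m^2+i0]^{-1}\otimes 1_{e,e'}$. For $m>0$ the map $g\colon p\mapsto p^2-m^2$ has $g(0)=-m^2\neq0$, so $[p^2-m^2+i0]^{-1}$ is smooth -- indeed real-analytic -- in a neighbourhood of $p=0$ and is singular only on the mass hyperboloid $\{p^2=m^2\}$. Realising it as the pullback $g^\ast[t+i0]^{-1}$ of $[t+i0]^{-1}\in\sS'(\bR)$ and noting that $g'(p)=2p$ vanishes only at $p=0$, where $g$ does not, one gets $N_g\cap\wf[t+i0]^{-1}=\emptyset$; then Lemma~\ref{lema:pullbacks}, Eq.~\eqref{eq:wf_tpm} and the Minkowski sign flip of Remark~\ref{remk:mink_vs_rn} give
\[
 \wf\bigl([p^2-m^2+i0]^{-1}\otimes 1_{e,e'}\bigr)
 = \set{(p,e,e';\la p,0,0)\;|\;p^2=m^2,\ \la<0}.
\]
The decisive difference from the massless case is that, unlike in Eq.~\eqref{eq:wf_p2_inverse}, there is no contribution over $p=0$ at all; the pullback is globally well defined on $\mink$.

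Next I would check the H\"ormander criterion~\eqref{eq:hoermander_krit} for the triple product on $(\mink\setminus0)\times H^2$: by inspection of Eqs.~\eqref{eq:wf_upm_otimes_prime}, \eqref{eq:wf_upm_otimes} and the display above, the covector components of the three wavefront sets cannot add up to zero when $p\neq0$ -- the $e$- and $e'$-slots would force $p=0$ -- so the product~\eqref{eq:massive_kernel_generic} exists on $(\mink\setminus0)\times H^2$, with wavefront set contained in the union of the three. It remains to extend across $p=0$, and here I would argue exactly as in Lemma~\ref{lema:well_def_SI_2pf_and_prop_massless}: because the denominator is smooth at $p=0$ it has scaling degree $0$ there -- against $2$ for $[p^2+i0]^{-1}$ in the massless case -- so the total scaling degree at $p=0$ is $k+k'-\omega$, and the hypothesis $\omega-k-k'>-4$ is precisely what renders~\eqref{eq:massive_kernel_generic} locally integrable at $p=0$, hence a well-defined distribution on all of $\mink\times H^2$.

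The only place needing care is the second step -- showing that $[p^2-m^2+i0]^{-1}$ carries no wavefront contribution over $p=0$ when $m>0$ -- since this is exactly what restricts the relevant singularities to the mass shell and weakens the power-counting condition from $\omega-k-k'-2>-4$ to $\omega-k-k'>-4$. One should also keep in mind that the H\"ormander product of the $u_\pm$-factors on their own already fails to exist over points with $p=0$ and $e'\in\bR_{+}\,e$, where the $e$- and $e'$-covectors cancel; so the passage across $p=0$ genuinely has to be made by the scaling/local-integrability argument rather than by the wavefront calculus -- just as in the proof of the massless case.
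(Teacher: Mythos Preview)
Your proposal is correct and follows essentially the same route as the paper: realise $[p^2-m^2+i0]^{-1}$ as the pullback of $[t+i0]^{-1}$ by $p\mapsto p^2-m^2$, check that $N_g\cap\wf[t+i0]^{-1}=\emptyset$ (using that $g(0)=-m^2\neq0$), obtain the wavefront set supported only over the mass shell, verify the H\"ormander criterion on $(\mink\setminus0)\times H^2$ exactly as in Lemma~\ref{lema:well_def_SI_2pf_and_prop_massless}, and then extend across $p=0$ by local integrability under the relaxed bound $\omega-k-k'>-4$. Your final paragraph, pointing out that the $u_\pm$-factors alone already fail the H\"ormander criterion at $p=0$ when $e'\in\bR_+e$ and hence that the passage across $p=0$ must proceed via integrability, is a correct and clarifying addition not made explicit in the paper.
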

 \begin{proof}
  The numerator of (\refeq{eq:massive_kernel_generic}) is the same as 
  in Lemma \ref{lema:well_def_SI_2pf_and_prop_massless}. Since 
  $[p^2-m^2+i0]^{-1}$ is smooth at $p=0$ for $m>0$, it is enough to 
  require $\omega-k-k'>-4$ for local integrability at $p=0$  
  instead of $\omega-k-k'-2>-4$, which was necessary in the massless 
  case. 
  
  Similar to the procedure in the proof of Lemma \ref{lema:Upm_welldef_wf},
  the distribution $[p^2-m^2+i0]^{-1}$ can be seen as the pullback 
  of $[t+i0]^{-1}\in\sS'(\bR)$ by the map $f:\mink \rightarrow \bR$
  with $f(p) = p^2-m^2$ with set of normals
  \begin{align}
   N_f = \set{ (p^2-m^2;\xi) \in \bR^2 \;|\; \xi p=0}
  \end{align}
  because $N_f \cap \wf [t+i0]^{-1} = \emptyset$. Then 
  \begin{align}
   \wf [p^2-m^2+i0]^{-1} \subset f^\ast \wf [t+i0]^{-1} 
   = \set{ (p;\la p) \;|\; p^2=m^2,\; \la < 0},
  \end{align}
  where the inverted sign of $\la$ again comes from the 
  fact that we work over Minkowski space, as explained in 
  Remark \ref{remk:mink_vs_rn}.
  Clearly, the covector components of 
  $\wf\left( [p^2-m^2+i0]^{-1} \otimes 1_{e,e'} \right)$ 
  and the wavefront sets  (\refeq{eq:wf_upm_otimes_prime}), 
  (\refeq{eq:wf_upm_otimes}) cannot add up to zero when $p\ne0$, 
  giving the well-definedness of (\refeq{eq:massive_kernel_generic})
  as a distribution on $\mink \times H^2$.
 \end{proof}
 
 \begin{remk}
  The conditions $\omega-k-k'-2>-4$ in the massless and 
  $\omega-k-k'>-4$ in the massive case ensure local integrability 
  with respect to $p=0$. It does not automatically follow that 
  the respective distributions are ill-defined if these integrability 
  conditions are not satisfied. However, when investigating 
  the position space representation of the doubly string-integrated 
  massless Feynman propagator $I_e I_{-e'} [x^2-i0]^{-1}$, the 
  singularity is explicitly observed as an infrared effect 
  \cite{GRT21}. 
 \end{remk}

 \begin{remk}
  In configuration space, one might be tempted to circumvent the 
  integrability conditions in Lemmas 
  \ref{lema:well_def_SI_2pf_and_prop_massless} and 
  \ref{lema:well_def_SI_2pf_and_prop_massive} by shifting the 
  string-integration operation to the $x$-part of the test function. 
  Since the latter is a Schwartz function, application of any finite number 
  of string-integrations to it remains finite. However, the result 
  is no Schwartz function anymore. In direction of the string, it 
  converges to a constant that is in general non-zero. Therefore, 
  the integrability conditions are necessary if one does not wish to 
  leave the regime of distribution theory. 
 \end{remk}

 The string-localized potentials defined in 
 Eq.~(\refeq{eq:def_slf_all_heli}) are homogeneous of degree $0$ 
 in the string variable because each string-integration $I_e$ is 
 accompanied by a factor $e^\mu$. Therefore it proves useful to 
 replace the distributions $u_\pm(p,e)$ by the vector-valued 
 distributions 
 \begin{align}
 \label{eq:def_q_pm}
 q_\pm^\mu (p,e) := \pm i u_\pm(p,e) \,e^\mu.
 \end{align}
 The wavefront set of a vector-valued distribution is defined as the 
 union of the wavefront sets of the components. But since each component 
 of $q^\mu_\pm$ is nothing but $u_\pm$ times a smooth function, 
 we have 
 \begin{align}
  \label{eq:wfq_contained_in_wfu}
  \wf q_\pm^\mu \subset \wf u_\pm.
 \end{align}
 As long as $e\ne0$, in particular if $e$ is spacelike, both wavefront 
 sets in Eq.~(\refeq{eq:wfq_contained_in_wfu}) are equal. Thus, if 
 \begin{align}
  {_mM}^{F,F}_{[\mu_1\nu_1]\cdots[\mu_s\nu_s]
            [\ka_1\la_1]\cdots[\ka_s\la_s]}(p)
 \end{align}
 denotes the kernel of the kinematic spin/helicity-$s$ field strength 
 propagator, then the kinematic propagator for the string-localized 
 potential is given by replacing ${_mM}^{F,F}$ by  
 \begin{align}
  {_mM}^{A,A}_{\mu_1\cdots\mu_s\ka_1\cdots\ka_s}(p,e,e') 
  := 
  {_mM}^{F,F}_{[\mu_1\nu_1]\cdots[\mu_s\nu_s]
            [\ka_1\la_1]\cdots[\ka_s\la_s]}
 \prod_{i=1}^s q_-^{\nu_i}(p,e) q_+^{\la_i}(p,e')
 \end{align}
 in Eq.~(\refeq{eq:kinematic_prop_pl}), provided that the requirements 
 of Lemma~\ref{lema:well_def_SI_2pf_and_prop_massless} or 
 \ref{lema:well_def_SI_2pf_and_prop_massive}, respectively, are 
 satisfied (cf.~also \cite{MunddeO17}).

   For both $m>0$ and $m=0$, the kernel ${_mM}^{F,F}_{\mu\nu,\ka\la}$ 
   of field strength $F_{\mu\nu}$
  of spin/helicity $s=1$ reads
  \begin{align}
   {_mM}^{F,F}_{\mu\nu,\ka\la}(p) = -\eta_{\mu\ka} p_\nu p_\la 
   + \eta_{\mu\la} p_\nu p_\ka + \eta_{\nu\ka} p_\mu p_\la 
   -\eta_{\nu\la} p_\mu p_\ka
  \end{align}
  and is homogeneous of degree $\omega=2$. Since for $s=1$, the 
  string-localized potential is $A_\mu(x,e)=I_e F_{\mu\nu}(x)e^\nu$, 
  string-integration of both $F$'s in the propagator gives 
  a total factor $q_-^\nu(p,e)\cdot q_+^\la(p,e')$ so that 
  $\omega-k-k'=0$. Then the requirements for the respective 
  Lemmas~\ref{lema:well_def_SI_2pf_and_prop_massless} and 
  \ref{lema:well_def_SI_2pf_and_prop_massive} are met and 
  \begin{align}
    {_mM}^{A,A}_{\mu\ka}(p,e,e') 
    &= -\eta_{\mu\ka} 
        + \frac{e_\ka p_\mu}{(pe)-i0} 
        + \frac{{e'}_\mu p_\ka}{(pe')+i0} 
        - \frac{(ee') p_\mu p_\ka}{[(pe)-i0][(pe')+i0]} 
    \notag \\
    &=: - E_{\mu\ka}(p,e,e').
  \label{eq:def_Emunu}
  \end{align}
  
It turns out that the quantity $E_{\mu\kappa}(p,e,e')$ from 
Eq.~(\refeq{eq:def_Emunu}), which is the kernel for the kinematic
propagator of $s=1$ string-localized potentials is enough to
describe the kernel of the kinematic propagator for string-localized 
potentials of arbitrary spin/helicity \cite{MundRS17b}. By symmetry 
of the field strengths and therefore also of the potentials 
in Eq.~(\refeq{eq:def_slf_all_heli}), one does not lose any information 
if one contracts all indices of $A_{\mu_1\cdots\mu_s}(x,e)$ with the 
same (arbitrary) four-vector $f^\mu$ and defines 
\begin{align}
\label{eq:def_A_f}
 A_f^{(s)}(x,e) := f^{\mu_1}\cdots f^{\mu_s} A_{\mu_1\cdots\mu_s}(x,e).
\end{align}
The authors of \cite{MundRS17b} were able to prove -- without considering 
questions of well-definedness -- that the kernel of the two-point function 
of $A^{(s)}(x,f)$ and $A^{(s)}(x',f')$ for all $m\ge0$ is given by
\begin{align}
 \label{eq:kernel_general_s}
 {_mM^{A^{(s)}_f,A^{(s)}_{f'}}}(p,e,e') 
 = (-1)^s \sum_{2n\le s} \beta_n^s \, (E_{ff})^n ( E_{f'f'})^n  (E_{ff'})^{s-2n}
 ,
\end{align}
 with coefficients $\beta_n^s$ that are of no interest here. 
 We have used the notation  
\begin{align}
 E_{ff} &:=  f^{\mu} E_{\mu\nu}(p,e,-e)f^{\nu},\quad  
 E_{f'f'} := {f'}^{\mu} E_{\mu\nu}(p,-e',e')  {f'}^{\nu},\word{and} \notag \\
 E_{ff'} &:= f^{\mu} E_{\mu\nu}(p,e,e')  {f'}^{\nu}.
  \label{eq:E_ff_etc}
\end{align}
The signs in the arguments of Eq.~(\refeq{eq:E_ff_etc}) ensure that 
each $(pe)$ is accompanied by a shift $-i0$ and each $(pe')$ is 
accompanied by a shift $+i0$. Moreover, it is clear that the 
kernels from Eq.~(\refeq{eq:kernel_general_s}) arise from an $s$-fold 
$e$-integration and an $s$-fold $-e'$-integration times a polynomial 
in $p$ which is homogeneous of degree $\omega=2s$. Therefore, 
Lemmas \ref{lema:well_def_SI_2pf_and_prop_massless} and 
\ref{lema:well_def_SI_2pf_and_prop_massive} apply to the kinematic 
propagators with kernels as in Eq.~(\refeq{eq:kernel_general_s}) for 
all $s$ and have the following Theorem as a corollary. 

\begin{thm}
 \label{thm:well_def_kinematic_props_all_s}
 The kinematic propagators of the string-localized potentials 
 for arbitrary mass $m\ge0$ and spin/helicity $s\in\bN$ defined 
 by
 \begin{align}
  \vev{T_0 A^{(s)}_f A^{(s)}_{f'}}(x) 
  :=  \int \frac{d^4p}{(2\pi)^4} \frac{e^{-i(px)} }{p^2-m^2+i0} 
   \;{_mM^{A^{(s)}_f,A^{(s)}_{f'}}}(p,e,e')
 \end{align}
 are well-defined distributions on $\mink \times H^2$.
\end{thm}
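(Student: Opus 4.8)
The plan is to reduce the theorem to a term-by-term application of Lemmas~\ref{lema:well_def_SI_2pf_and_prop_massless} and~\ref{lema:well_def_SI_2pf_and_prop_massive}; the only substantive step is to expose, inside the closed-form kernel~\eqref{eq:kernel_general_s}, the balance between the number of factors $u_\pm$ and the degree of the accompanying polynomial in $p$. To do so I would read off~\eqref{eq:def_Emunu} and carry out the contractions with $f,f'$ (using $[(p(-e))\pm i0]^{-1}=-[(pe)\mp i0]^{-1}$ to convert every $(pe)$-shift into a $-i0$ and every $(pe')$-shift into a $+i0$, as already noted after~\eqref{eq:E_ff_etc}), which gives
\begin{align*}
 E_{ff} &= -f^2 + 2(ef)(pf)\,u_-(p,e) - e^2(pf)^2\,[u_-(p,e)]^2, \\
 E_{f'f'} &= -f'^2 + 2(e'f')(pf')\,u_+(p,e') - e'^2(pf')^2\,[u_+(p,e')]^2, \\
 E_{ff'} &= -(ff') + (ef')(pf)\,u_-(p,e) + (e'f)(pf')\,u_+(p,e') \\
 &\quad{}- (ee')(pf)(pf')\,u_-(p,e)\,u_+(p,e').
\end{align*}
Each of the three is a finite sum of monomials $[u_-(p,e)]^a[u_+(p,e')]^b\,c\,q(p)$, where $c$ is a ($p$-independent) polynomial in the string and reference vectors and $q$ is a monomial in $p$ of degree exactly $a+b$. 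The crucial point is that this ``$p$-degree equals the number of $u$-factors'' balance is preserved under multiplication, since forming a product adds the $u$-exponents and the $p$-degree by the same amount.

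Expanding~\eqref{eq:kernel_general_s} completely then yields, since $(E_{ff})^n$ produces at most $2n$ factors $u_-(p,e)$, $(E_{f'f'})^n$ at most $2n$ factors $u_+(p,e')$, and $(E_{ff'})^{s-2n}$ at most $s-2n$ factors of each (the numerical coefficients $\beta_n^s$ and $(-1)^s$ playing no role), a finite sum
\begin{align*}
 {_mM^{A^{(s)}_f,A^{(s)}_{f'}}}(p,e,e') = \sum_{0\le k,k'\le s}[u_-(p,e)]^k\,[u_+(p,e')]^{k'}\,M_{k,k'}(p,e,e'),
\end{align*}
in which each $M_{k,k'}$ is polynomial in $(p,e,e')$ and, by the balance above, homogeneous of degree $k+k'$ in $p$, so that $\omega=k+k'$ is its smallest (indeed only) power of $p$. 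The polynomial dependence on $e,e'$ is harmless, amounting to multiplication by a function smooth on $H^2$, and by Corollary~\ref{corl:powers_upm} the powers $[u_-]^k$ and $[u_+]^{k'}$ exist on $\mink\times H^2$. Hence each summand, divided by $p^2-m^2+i0$, has exactly the form treated in Lemmas~\ref{lema:well_def_SI_2pf_and_prop_massless} and~\ref{lema:well_def_SI_2pf_and_prop_massive}.

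It remains to verify the degree inequalities. With $\omega=k+k'$ one has $\omega-k-k'=0>-4$, which meets the hypothesis of Lemma~\ref{lema:well_def_SI_2pf_and_prop_massive} for $m>0$, and $\omega-k-k'-2=-2>-4$, which meets the hypothesis of Lemma~\ref{lema:well_def_SI_2pf_and_prop_massless} for $m=0$. Therefore each term of the expansion, divided by $p^2-m^2+i0$, is a well-defined distribution on $\mink\times H^2$; summing the finitely many terms and taking the inverse Fourier transform in $p$ (legitimate, since the kernel is tempered in $p$ and the transform commutes with restriction to the open set $H^2$) proves the theorem. I do not expect a genuine obstacle here: the entire content is the combinatorial bookkeeping confirming that no term in the expansion of~\eqref{eq:kernel_general_s} is more singular in $p$ than its number of string integrations permits, after which the two lemmas finish the argument with the margins $-2$ and $0$ sitting comfortably above the threshold $-4$.
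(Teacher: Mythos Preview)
Your proof is correct and follows the same strategy as the paper: reduce to Lemmas~\ref{lema:well_def_SI_2pf_and_prop_massless} and~\ref{lema:well_def_SI_2pf_and_prop_massive} by verifying $\omega-k-k'=0$. The only cosmetic difference is that the paper invokes the structural origin of the kernel in one stroke---it is an $s$-fold $e$-integration and an $s$-fold $(-e')$-integration applied to the degree-$2s$ field-strength polynomial, so $k=k'=s$ and $\omega=2s$ globally---whereas you expand the closed form~\eqref{eq:kernel_general_s} into monomials and check the same balance term by term.
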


\subsection{Perturbation theory with string-localized fields}
\label{ssc:perturbation_theory}
At the heart of perturbation theory in quantum field theory is the 
construction of the scattering operator, or S-matrix $\bS$ as a formal 
power series (Dyson series) of time-ordered products of an interaction 
Lagrangian $\sLi$ describing a certain model \cite{BogoliubovS59}.

In the usual point-localized theories, $\bS=\bS[g]$ is considered as a 
functional of a multiplet of Schwartz functions $g$, which are 
interpreted as large distance cutoffs of coupling constants, and tend to 
constants in the adiabatic limit. For 
a rigorous construction, one usually axiomatizes certain properties of 
the S-matrix, such as its unitarity, Lorentz and translation invariance 
and causality \cite{EpsteinGlaser73,BogoliubovS59}. Additionally, it can 
be subject to (sometimes model-dependent) internal and discrete symmetries 
\cite[Sec. 3.3]{Weinberg95I}. In gauge theories, its form is further 
constrained by the requirement of perturbative gauge invariance 
\cite{AsteS99,DuetschS99,ScharfLast}. There is no concept of gauge in 
string-localized field theories and the gauge invariance principle 
is replaced by the requirement for string independence of the 
S-matrix (see for example \cite{GBMV18,GGBM21}).

The time-ordered products of operator-valued distributions (i.e., of 
$\sLi$) that appear in the Dyson series for $\bS[g]$ are usually 
reduced to products of numerical distributions by taking expectation 
values and employing Wick's theorem. These products are ill-defined 
at a diagonal set and renormalization is the extension of these 
products of distributions to the whole space 
\cite{EpsteinGlaser73,BrunettiF00}, as we have illustrated 
in Example \ref{exap:extension_Feynman_prop}. 
In the following, we sketch a possible transition of these 
notions, which are well-established in point-localized theories, 
to SLFT. 

 As a first step for this transition, one must declare the nature of 
 the string-localization. Is string-localization a feature of the 
 potentials, the Lagrangian or the S-matrix? That is to say: Does 
 each field come with its own string variable, do the fields in 
 the interaction Lagrangian depend on the same string variable or do 
 \textit{all} appearing fields depend on the same string variable: 
 \begin{subequations}
  \begin{align}
   \label{eq:alternative_A}
   \sLi &=\sLi(x,e_1,\cdots,e_k) &&\word{(each SL field has its own string variable),}\\
    \label{eq:alternative_L}
   \sLi &= \sLi(x,e) &&\word{(all SL fields in $\sLi$ depend on the same $e$),} \\
   \label{eq:alternative_S}
   \bS &= \bS[g;e]  &&\word{(there is only a single string variable).}
  \end{align}
 \end{subequations}
 In a generic model, the three alternatives (\refeq{eq:alternative_A}), 
 (\refeq{eq:alternative_L}) and (\refeq{eq:alternative_S}) result in 
 completely different analytic properties of the corresponding 
 perturbation theory. Note, however, that the alternatives 
 (\refeq{eq:alternative_A}) and (\refeq{eq:alternative_L}) are 
 equivalent if $\sLi$ is at most linear in the string-localized 
 potentials, as is the case in QED.  
 
 Alternative (\refeq{eq:alternative_S}) is desirable if one wants to keep 
 the delocalization as small as possible. However, it is in general not realizable. 
 If there is only a single string variable, the kernels (\refeq{eq:kernel_general_s}) 
 of the propagators of the involved string-localized potentials need 
 to be pulled back to the $e$-diagonal and consequently will contain 
 ill-defined products $u_+ \cdot u_-$. Now, the Hörmander criterion 
 Eq.~(\refeq{eq:hoermander_krit}) is only sufficient but not necessary, 
 meaning that it is not fully excluded that one can make sense of 
 $u_+ \cdot u_-$ although the Hörmander criterion is not 
 met.\footnote{The result may then have unwanted properties such as 
 that the Leibniz rule is not applicable.} In the case at hand, 
 however, the divergence can be observed explicitly \cite{GRT21}. This 
 rules out alternative (\refeq{eq:alternative_S}) for spacelike 
 strings.
 
 \begin{remk}
 There have been approaches that employ alternative (\refeq{eq:alternative_S}) 
 for \textit{lightlike} string variables and massive string-localized 
 potentials \cite{GBMV18}. But such an approach comes with other drawbacks, 
 which we will describe in Appendix \ref{app:lightlike_strings}.
 \end{remk}
 
 The interaction Lagrangian $\sLi$ depends only on a single 
 $x$-variable. One can therefore argue that alternative 
 (\refeq{eq:alternative_L}) is a natural choice to set up 
 perturbation theory in SLFT. In this case, loop graph contributions 
 would consist of products of propagators in $x$ and 
 $e$ and one must expect renormalization to become very complicated. 
 Recent observations in the string-localized equivalent of massless 
 Yang-Mills theory suggest that alternative (\refeq{eq:alternative_L}) 
 does not reproduce the standard model of particle physics 
 \cite{GGBM21}.\footnote{The cited work considers our alternative 
 (\refeq{eq:alternative_A}) from the beginning without mentioning other 
 alternatives but one can verify without too much effort 
 that the Lie algebra structure of gluon self-interactions is not compatible 
 with alternative (\refeq{eq:alternative_L}) by adjusting Section 2.3 
 in \cite{GGBM21} according to $\sLi(x,e_1,e_2)\rightarrow \sLi(x,e)$.}
 This observation rules out alternative (\refeq{eq:alternative_L}) for 
 phenomenological reasons.

 We are thus left with the alternative (\refeq{eq:alternative_A}), 
 i.e., $\sLi=\sLi(x,e_1,\cdots,e_k) =: \sLi(x,\stringarray)$, which 
 is also employed in the cited work \cite{GGBM21}. The analyses 
 therein additionally require a symmetry under exchange of all 
 string variables that appear in a fixed order of perturbation theory. 
 This symmetry can be achieved by smearing all string variables with 
 the same averaging function $c\in\sD(H)$ with 
 $  \int d^4e \, c(e) =1 $.\footnote{The test function $c$ needs 
 to have integral equal to $1$ if the string-localized potential is 
 to remain a potential for the field strength after smearing out 
 the $e$-variable.}
 With this at hand, we are finally 
 able to write down a candidate for the string-localized S-matrix, 
 \begin{align}
 \bS[g;c] := 1 + \sum_{n=1}^\infty \frac{i^n}{n!} \prod_{j=1}^n 
 \prod_{l=1}^k \int d^{4}x_j \int d^4e_{j,l} \, 
 g(x_j) c(e_{j,l}) \,S_n(x_1,\stringarray_1;\dots;x_n,\stringarray_n),
  \label{eq:dyson-series} 
\end{align}
 where the first-order coupling $S_1(x,\stringarray) 
 = \wick:\sLi(x,\stringarray):$ is the Wick-ordered interaction 
   Lagrangian. The property that $c$ integrates to unity ensures consistency if 
   $\sLi$ is a sum of terms where different powers of string-localized 
   potentials appear. 
  The higher-order couplings $S_n$ need to be constructed recursively 
  as time-ordered products of the first-order coupling. 
 
 However, the construction of time-ordered products in a string-localized  
 field theory is a non-trivial task, for one needs to make sense of how to 
 order several semi-infinite strings in time. 
 An axiomatic framework comparable to the one that is available 
 in point-localized QFT has not yet been formulated. 
 
  One approach towards a construction of time-ordered products  
 in string-localized QFT is called \textit{string-chopping} 
 \cite{CardosoMV18}. It proceeds by time-ordering segments 
 of string integrals wherever possible and taking account of 
 the singularity structure where time-ordering is ambiguous.  
 String chopping has been implemented for certain models 
 \cite{CardosoMV18,GGBM21} but a proof of its general validity 
 has not yet been given and it is still unclear how a generalization could 
 work. 
 
 The formulation of a fully self-contained and comprehensive axiomatic 
 framework for the construction of the time-ordered products in 
 Eq.~(\ref{eq:dyson-series}) is beyond the scope of this paper. 
 Instead, our aim is to show that there exist finite solutions 
 (in the sense of operator-valued distributions) for the string-localized 
 $S_n$ if they exist in the point-localized equivalent. 
 To prove the assertion, we write down the formal Wick-expansion  
 \begin{align}
  S_n
  &= T[\wick:\sLi(x_1,\stringarray_1):\dots\wick:\sLi(x_n,\stringarray_n):] 
  \notag \\
  &= \wick:\sLi(x_1,\stringarray_1)\dots\sLi(x_n,\stringarray_n): 
    \notag \\
  &\quad 
  + \sum_{\phi,\chi} \wick:\frac{\del \sLi(x_1,\stringarray_1)}{\del\phi}
  \frac{\del \sLi(x_2,\stringarray_2)}{\del\chi}\dots\sLi(x_n,\stringarray_n):  \vev{T\phi\chi} + \dots 
  \label{eq:wick_expansion}\\
  &\quad  +\sum_{\substack{\phi_1,\phi_2,\\ \chi_1,\chi_2}}
  \wick:\frac{\del^2 \sLi(x_1,\stringarray_1)}{\del\phi_1\phi_2}
  \frac{\del^2 \sLi(x_2,\stringarray_2)}{\del\chi_1\chi_2}\dots\sLi(x_n,\stringarray_n):  \vev{T\phi_1\chi_1}\vev{T\phi_2\chi_2} + \dots \notag \\
  &\quad+\dots \notag
 \end{align}
 as a sum containing a priori ill-defined products of propagators, some of which 
 may be string-localized. 
 An important property of the Dyson series
 Eq.~(\refeq{eq:dyson-series}) is that each string-localized potential 
 comes with its own string variable. This property has the consequence 
 that the products of propagators in Eq.~(\refeq{eq:wick_expansion})
 are products \textit{only} in the $x$-variables but tensor products in 
 the string variables. It is therefore enough to regularize 
 Eq.~(\ref{eq:wick_expansion}) \emph{after} integrating out the string 
 dependence of the propagators with the test function $c$. In Section 
 \ref{sec:WF}, we prove that, after smearing out the string variables, 
 the wavefront set of a relevant class of string-localized propagators 
 is contained in the wavefront set of the ordinary Feynman propagator. 
 As a consequence, the products of propagators appearing in 
 Eq.~(\ref{eq:wick_expansion}) are well-defined whenever they are 
 well-defined in the point-localized case and the regularization of 
 divergent amplitudes by extension of the products of propagators across 
 the remaining points stays a pure short distance problem in SLFT. 
 This statement 
 is due to the fact that the singularities of string-localized propagators, 
 which are the building blocks of the Wick expansion, are better behaved 
 than one might naively expect.

 Our considerations do not yield a full classification of the 
 Epstein-Glaser-like freedom of renormalization of the $S_n$. 
 However, because string-localized fields have an improved scaling 
 behavior compared to their point-localized counterparts, 
 these ambiguities are not expected to exceed the ones in 
 point-localized theories.

\section{Products of string-localized propagators}
\label{sec:WF}
 The heuristic considerations in Section 
 \ref{ssc:perturbation_theory} led us to the conclusion that each 
 string-localized potential has its own string variable and that all  
 string variables are smeared with the same test function. Therefore, 
 we can make a transition from Eq.~(\refeq{eq:def_slf_all_heli}) to 
 the smeared potentials 
 \begin{align}
  \label{eq:def_smeared_A}
  A_{c,\mu_1\cdots\mu_s}(x) := \int d^4e\,  
    A_{\mu_1\cdots\mu_s}(x,e) \,c(e)
 \end{align}
 for $c\in\sD(H)$ \textit{before} plugging them into the S-matrix. 
 Similarly, after using translation invariance of the propagator, 
 we can define a map 
 \begin{align}
  \label{eq:smeared_propagator}
  K: \sD(H) \rightarrow  \sS'(\mink),\; 
  c \mapsto \vev{T_0  A_{c,\mu_1\cdots\mu_s}  
    A_{c,\ka_1\cdots\ka_s}}(x).
 \end{align}
 The right-hand side of Eq.~(\refeq{eq:smeared_propagator}) is a 
 distribution only in $x$ but its singularities might depend on the 
 test function $c$. By Lemma \ref{lema:wf_kernel_smeared}, we have 
 \begin{align}
   \wf [ K(c)] \subset \set{ (x;p)  \;|\; &(x,e,e';p,0,0) 
    \in \wf \vev{T_0  A_{\mu_1\cdots\mu_s}(e)  
    A_{\ka_1\cdots\ka_s}(e')} , \notag \\ &e,e' \in \supp c
                            }.
   \label{eq:wf_smeared_dists}
 \end{align}
 The estimate (\refeq{eq:wf_smeared_dists}) is the key to  proving 
 that string-integration does not introduce new 
 singularities to the propagators of string-localized fields.
 
\subsection{Products of kinematic propagators}
\label{ssc:prod_kinematic_props}
We investigate the effect of the transition from distributions 
over $\mink \times H^2$ to distributions over $\mink$ on the 
kinematic string-localized propagators described in Section 
\ref{ssc:slf_props}, starting with the following lemma. 
\begin{lema}
\label{lema:qc_pm}
 For $c\in\sD(H)$, we define 
 \begin{align}
 \label{eq:def_qc_mu}
   q^{\mu_1\cdots\mu_s}_{c,\pm}(p):= \int d^4e \, c(e) 
   \frac{(\pm i)^s e^{\mu_1} \cdots e^{\mu_s}}{[(pe) \pm i0]^s}.
 \end{align}
 Then $q^{\mu_1\cdots\mu_s}_{c,\pm}(p) \in \sS'(\mink)$ with 
 $\wf q^{\mu_1\cdots\mu_s}_{c,\pm}(p) 
 =\set{(0;\lambda e) \;|\; \la \lessgtr 0, e \in \supp c}$.
 \end{lema}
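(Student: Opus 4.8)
The idea is to recognise $q^{\mu_1\cdots\mu_s}_{c,\pm}$ as the partial smearing, in the string variable, of the tensor-valued distribution $Q^{\mu_1\cdots\mu_s}_\pm:=\prod_{i=1}^s q^{\mu_i}_\pm$ assembled from the $q^\mu_\pm$ of \eqref{eq:def_q_pm}, and to extract its wavefront set from what Corollary~\ref{corl:powers_upm} already gives for powers of $u_\pm$. Unfolding \eqref{eq:def_q_pm} yields $Q^{\mu_1\cdots\mu_s}_\pm=(\pm i)^s e^{\mu_1}\cdots e^{\mu_s}[u_\pm]^s$, i.e.\ $[u_\pm]^s$ times a function that is smooth on $\mink\times H$. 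By Corollary~\ref{corl:powers_upm}, $[u_\pm]^s\in\sD'(\mink\times H)$ with $\wf[u_\pm]^s=\wf u_\pm$, and multiplication by a smooth factor does not enlarge the wavefront set (cf.\ \eqref{eq:wfq_contained_in_wfu}); hence $Q^{\mu_1\cdots\mu_s}_\pm\in\sD'(\mink\times H)$ and $\wf Q^{\mu_1\cdots\mu_s}_\pm\subset\wf u_\pm$. The linear map $\sD(H)\to\sD'(\mink)$ associated with the kernel $Q^{\mu_1\cdots\mu_s}_\pm$ in the sense of Lemma~\ref{lema:wf_kernel_smeared} is, after unfolding the pairing, precisely $c\mapsto q^{\mu_1\cdots\mu_s}_{c,\pm}$, so the latter is a well-defined distribution on $\mink$. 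Since the defining integral is manifestly homogeneous of degree $-s$ in $p$ (the averaging function $c$ carries no $p$-dependence), $q^{\mu_1\cdots\mu_s}_{c,\pm}$ is a homogeneous distribution, hence automatically tempered \cite[Theorem~7.1.18.]{Hoermander90}, giving $q^{\mu_1\cdots\mu_s}_{c,\pm}\in\sS'(\mink)$.

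\textbf{The inclusion $\wf q^{\mu_1\cdots\mu_s}_{c,\pm}\subset\set{(0;\la e)\;|\;\la\lessgtr0,\;e\in\supp c}$.} Apply Lemma~\ref{lema:wf_kernel_smeared} with $K=Q^{\mu_1\cdots\mu_s}_\pm$ on $X\times Y=\mink\times H$. By \eqref{eq:wf_upm_spacelike}, a covector of $\wf Q^{\mu_1\cdots\mu_s}_\pm$ over a point $(p,e)$ is of the form $(\la e,\la p)$ with $\la\lessgtr0$ and $(pe)=0$; Lemma~\ref{lema:wf_kernel_smeared} keeps only those whose $e$-entry $\la p$ vanishes, which, as $\la\neq0$, forces $p=0$, and then the surviving $p$-entry equals $\la e$ with $e\in\supp c$. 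This is the asserted inclusion, the sign of $\la$ being tied to $\pm$ by the Minkowski sign convention (Remark~\ref{remk:mink_vs_rn}).

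\textbf{The reverse inclusion.} Here I would use homogeneity. A nonzero homogeneous distribution of negative degree cannot be smooth at the origin, so $0\in\singsupp q^{\mu_1\cdots\mu_s}_{c,\pm}$, and by the previous step the whole wavefront set sits over $p=0$; by Lemma~\ref{lema:wf_homogeneous} it is then governed by the support of the Fourier transform, $(0;\xi)\in\wf q^{\mu_1\cdots\mu_s}_{c,\pm}\Leftrightarrow\xi\in\supp\widehat{q^{\mu_1\cdots\mu_s}_{c,\pm}}$ for $\xi\neq0$. One computes $\widehat{q^{\mu_1\cdots\mu_s}_{c,\pm}}$ in $p$: since $[(pe)\pm i0]^{-s}$ depends on $p$ only through the linear form $p\mapsto(pe)$, its Fourier transform is supported on the line $\bR e$ and, along that line, is a nonzero multiple of the Fourier transform of $[t\pm i0]^{-s}\in\sS'(\bR)$, which, by the same mechanism that produced \eqref{eq:wf_tpm}, is supported on a half-line. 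Smearing against $c(e)\,e^{\mu_1}\cdots e^{\mu_s}$ over $e\in H$ then exhibits $\widehat{q^{\mu_1\cdots\mu_s}_{c,\pm}}$ as supported on $\bigcup_{e\in\supp c}\set{\la e\;|\;\la\lessgtr0}$; showing that this superposition does not cancel identifies the support with this (closed) set, and Lemma~\ref{lema:wf_homogeneous} upgrades the inclusion to equality.

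\textbf{Main obstacle.} The delicate point is exactly this last one: that the $e$-integration does not shrink the support of the Fourier transform. It is subtle because the integrand $(\pm i)^s e^{\mu_1}\cdots e^{\mu_s}[(pe)\pm i0]^{-s}$ is homogeneous of degree $0$ in $e$, so $q^{\mu_1\cdots\mu_s}_{c,\pm}$ depends on $c$ only through the radial average $\hat e\mapsto\int_0^\infty\sigma^3 c(\sigma\hat e)\,d\sigma$ along rays; the reverse inclusion therefore needs this average to be non-vanishing on the directions occurring in $\supp c$ — which is automatic, for instance, when $c$ is non-negative, as is natural for an averaging function. Granting this, and recalling that the wavefront set is closed, one concludes $\wf q^{\mu_1\cdots\mu_s}_{c,\pm}=\set{(0;\la e)\;|\;\la\lessgtr0,\;e\in\supp c}$.
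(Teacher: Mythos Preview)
Your proof follows essentially the same route as the paper's: well-definedness via Corollary~\ref{corl:powers_upm} and Lemma~\ref{lema:wf_kernel_smeared}, temperedness by homogeneity, the inclusion $\subset$ from Lemma~\ref{lema:wf_kernel_smeared} applied to $\wf u_\pm$, and the reverse inclusion via Lemma~\ref{lema:wf_homogeneous} together with the support of the Fourier transform (which the paper writes as $\int d^4e\,c(e)\,e^{\mu_1}\cdots e^{\mu_s}I_{\pm e}^s\delta(x)$). The obstacle you flag --- that the $e$-integration could in principle shrink this support for sign-changing $c$, since the integrand is homogeneous of degree $0$ in $e$ --- is real, but the paper does not address it either; it simply asserts the support equals $\set{\la e\;|\;\la\lessgtr0,\;e\in\supp c}$ without further comment, so your argument is at least as complete as the paper's and your caveat about non-negative $c$ is a sensible way to close the gap.
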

\begin{proof}
 The expressions $q^{\mu_1\cdots\mu_s}_{c,\pm}(p)$ are the results of 
 smearing distributions of the form appearing in Corollary 
 \ref{corl:powers_upm} times a smooth (tensor-valued) function in the 
 string variable. Therefore, they are well-defined distributions. 
 By homogeneity, they are also tempered.
 
 Since $e\in H$ is spacelike and hence non-zero, the wavefront set of 
 $q^{\mu_1\cdots\mu_s}_{c,\pm}(p)$ must be contained in 
 $\set{ (p;x) \;|\; (p,e;x,0) \in \wf {u_\pm} }$
 by Lemma \ref{lema:wf_kernel_smeared},
 with $u_\pm$ as in Corollary \ref{corl:powers_upm} and 
 $\wf u_\pm$ as in  Eq.~(\ref{eq:wf_upm_spacelike}). This yields
 \begin{align}
  (p;x) \in \wf q^{\mu_1\cdots\mu_s}_{c,\pm}(p) 
  \quad\Rightarrow\quad 
  p=0 \text{ and } x=\la e  \text{ for some } \la\lessgtr 0 
  \text{ and } e\in\supp c.
 \end{align}
 To show that any such element $(0;\la e)$ is in the wavefront set, 
 note that the Fourier transform of $q^{\mu_1\cdots\mu_s}_{c,\pm}(p)$ 
 is 
 \begin{align}
  \int d^4p \, e^{i(px)} q^{\mu_1\cdots\mu_s}_{c,\pm}(p) 
  \sim \int d^4 e \, c(e)\, e^{\mu_1}\cdots e^{\mu_s} I_{\pm e}^s\delta(x)
 \end{align}
 with support $\set{x = \la e \;|\; \la\lessgtr 0,\; e \in \supp c}$. By 
 homogeneity and Lemma \ref{lema:wf_homogeneous}, 
 $(0;x)$ is an element of $\wf q^{\mu_1\cdots\mu_s}_{c,\pm}(p)$ if and only if $x$
 is in the support of the Fourier transform. This proves the claim.
\end{proof}
With Lemma \ref{lema:qc_pm} at hand, it is straightforward to adjust the proofs for the 
existence of the string-integrated kinematic propagators given in 
Section \ref{ssc:slf_props} to the expressions which are smeared 
in the string variables. When $p\ne0$, the expressions 
$ q^{\mu_1\cdots\mu_s}_{c,\pm}(p)$ are smooth and therefore, they 
can at most contribute to the wavefront set over $p=0$. We hence 
arrive at the following statement. 
\begin{lema}
 \label{lema:wf_kinematic_props_momentum_space}
 The kernels  
 \begin{align}
  \label{eq:FT_kinematic_props}
  \sF\left[  \vev{T_0  A^{(s)}_{c,f}  
   A^{(s)}_{c,f'}}(x) \right](p) =\int d^4e \int d^4 e' 
    \,c(e) c(e') \frac{{_mM}^{A^{(s)}_f,A^{(s)}_{f'}}(p,e,e')}{p^2-m^2+i0}
 \end{align}
 of the smeared kinematic string-localized propagators for all 
 masses $m\ge0$ and all spins respectively helicities $s\in\bN$ are tempered 
 distributions with 
 \begin{align}
  \label{eq:wf_smeared_kinematic_props_momentum_space}
   \wf \left(\sF\left[  \vev{T_0  A^{(s)}_{c,f}  
   A^{(s)}_{c,f'}}(x)  \right](p)\right)
  \subset
  \wf \frac{1}{p^2-m^2+i0} \cup \dot{T}^\ast_0.
 \end{align}
\end{lema}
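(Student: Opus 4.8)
The plan is to combine the three previous results---Lemma \ref{lema:qc_pm}, the explicit form of the kernel in Eq.~(\refeq{eq:kernel_general_s})--(\refeq{eq:E_ff_etc}), and the wavefront-set bounds from the proofs of Lemmas \ref{lema:well_def_SI_2pf_and_prop_massless} and \ref{lema:well_def_SI_2pf_and_prop_massive}---to control the right-hand side of Eq.~(\refeq{eq:FT_kinematic_props}). First I would observe that, after smearing both string variables against $c\in\sD(H)$, the kernel $\int d^4e\int d^4e'\,c(e)c(e')\,{_mM^{A^{(s)}_f,A^{(s)}_{f'}}}(p,e,e')$ is, up to a homogeneous polynomial in $p$, a finite linear combination of products of the distributions $q^{\mu_1\cdots\mu_j}_{c,-}(p)$ and $q^{\nu_1\cdots\nu_{j'}}_{c,+}(p)$ from Lemma \ref{lema:qc_pm} (the $e$- and $e'$-integrations factorize because each $E$-factor involves at most one shift of each sign, by the sign bookkeeping noted after Eq.~(\refeq{eq:E_ff_etc})). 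By Lemma \ref{lema:qc_pm}, each such factor is smooth away from $p=0$, with $\wf q_{c,\mp}^{\cdots}\subset\{(0;\lambda e)\mid \lambda\lessgtr0,\ e\in\supp c\}$.

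Next I would assemble the Hörmander product. Away from $p=0$ the whole smeared kernel is a smooth function of $p$, so it multiplies $[p^2-m^2+i0]^{-1}$ harmlessly there; the only issue is at $p=0$. To handle $p=0$ I would check the Hörmander criterion Eq.~(\refeq{eq:hoermander_krit}) for the product of the smeared kernel (a distribution supported, singularly, only at $p=0$) with $[p^2-m^2+i0]^{-1}$. In the massive case $[p^2-m^2+i0]^{-1}$ is smooth near $p=0$, so the product exists trivially there and $\wf$ of the product over $p=0$ is just $\wf$ of the smeared kernel, which is contained in $\dot T^\ast_0$; combined with the smoothness away from $p=0$ we get exactly Eq.~(\refeq{eq:wf_smeared_kinematic_props_momentum_space}). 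In the massless case, $\wf([p^2+i0]^{-1})$ over $p=0$ is $\{(0;x)\mid x\in\mink\setminus0\}$ by Lemma \ref{lema:wf_homogeneous}, and $\wf$ of the smeared kernel over $p=0$ is contained in $\{(0;\lambda e)\mid e\in\supp c\}$---so the Hörmander criterion could fail, since these covectors can be antipodal. Here I would instead use local integrability: the polynomial $M_\times(p)$ has lowest degree $\omega=2s$, and each factor $q_{c,\pm}^{\cdots}$ smeared against $\sD(H)$ actually behaves near $p=0$ no worse than $[(pe)\pm i0]^{-j}$ integrated against a compactly supported function on $H$, which is locally integrable; so the bound $\omega-k-k'-2>-4$ (which is $2s-2s-2=-2>-4$, satisfied) already guarantees that the full expression in Eq.~(\refeq{eq:FT_kinematic_props}) is locally integrable at $p=0$, hence a well-defined distribution there, whose wavefront set over $p=0$ is then contained in $\dot T^\ast_0$ automatically. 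Either way one concludes Eq.~(\refeq{eq:wf_smeared_kinematic_props_momentum_space}), with temperedness following from homogeneity of the integrand in $p$ (the $e,e'$-smearing leaves a homogeneous distribution in $p$).

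The main obstacle I anticipate is the $p=0$ analysis in the \emph{massless} case: the naive Hörmander product of $[p^2+i0]^{-1}$ with the smeared kernel fails the transversality criterion there, so one cannot simply invoke Eq.~(\refeq{eq:bound_wf_product}). The right move is to bypass the product theorem entirely near $p=0$ and argue local integrability directly---the smeared factors $q_{c,\pm}$ are no longer genuinely singular as distributions on $\mink$ (their singular support is $\{0\}$ but they are $L^1_{\mathrm{loc}}$, indeed continuous away from $0$ and only mildly singular at $0$ after the $e$-integration), so the product with a locally integrable $[p^2+i0]^{-1}$ makes classical sense under the stated degree condition. Once the distribution is known to exist with singular support $\{0\}$, any element of its wavefront set over $p=0$ is of the form $(0;x)$ for some $x\neq0$, i.e.\ lies in $\dot T^\ast_0$, which is all that Eq.~(\refeq{eq:wf_smeared_kinematic_props_momentum_space}) asserts there; and away from $p=0$ everything is smooth except $[p^2-m^2+i0]^{-1}$ itself. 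So the inclusion in Eq.~(\refeq{eq:wf_smeared_kinematic_props_momentum_space}) follows from these two regimes. I would present the massive case first as the clean model and then treat the massless case with the local-integrability refinement.
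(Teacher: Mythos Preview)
Your proposal is correct and follows essentially the same route as the paper: use Lemma~\ref{lema:qc_pm} to see that the smeared kernel is smooth away from $p=0$ (so only $[p^2-m^2+i0]^{-1}$ contributes to the wavefront set there), and at $p=0$ fall back on the local-integrability argument already used in Lemmas~\ref{lema:well_def_SI_2pf_and_prop_massless} and~\ref{lema:well_def_SI_2pf_and_prop_massive}, which automatically places the fiber over $0$ inside $\dot T^\ast_0$. One small caveat: your temperedness argument (``homogeneity of the integrand in $p$'') only covers the massless case directly, since for $m>0$ the full expression is not homogeneous; there you should instead note that the smeared numerator is homogeneous of degree $0$ (hence tempered and polynomially bounded) and that its product with the tempered distribution $[p^2-m^2+i0]^{-1}$, well-defined by your analysis, is again tempered.
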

In the massless case, the Fourier transform (\refeq{eq:FT_kinematic_props}) 
is homogeneous. Therefore, the wavefront set of the massless kinematic 
propagator in configuration space can be determined easily from 
Eq.~(\refeq{eq:wf_smeared_kinematic_props_momentum_space}) by use of 
Lemma \ref{lema:wf_homogeneous}. We obtain our first main result.
\begin{thm}[massless case]
 \label{thm:wf_massless}
 At $m=0$, the wavefront set of the smeared string-localized kinematic 
 propagator 
 \begin{align}
  \vev{T_0  A^{(s)}_{c,f} A^{(s)}_{c,f'}}(x)
   = \int \frac{d^4p}{(2\pi)^4} e^{-i(px)}\int d^4e \int d^4 e' 
    \,c(e) c(e') \frac{{_mM^{A^{(s)}_f,A^{(s)}_{f'}}}(p,e,e')}{p^2+i0}
 \end{align}
 is contained in the wavefront set of the massless point-localized 
 Feynman propagator from Eq.~(\refeq{eq:wf_DF}). In particular, products 
 of massless string-localized kinematic propagators and their product 
 with the propagators of point-localized fields are well-defined 
 on $\mink\setminus0$.
\end{thm}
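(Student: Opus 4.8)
Write $w(x):=\vev{T_0 A^{(s)}_{c,f} A^{(s)}_{c,f'}}(x)$ and let $\hat w=\sF w$ be its momentum-space kernel, the right-hand side of Eq.~(\ref{eq:FT_kinematic_props}) at $m=0$. The argument is essentially an assembly of Lemmas~\ref{lema:wf_kinematic_props_momentum_space} and \ref{lema:wf_homogeneous}. First I would record that $\hat w$ is homogeneous: by Eq.~(\ref{eq:def_Emunu}) each of the four terms of $E_{\mu\ka}(p,e,e')$ is homogeneous of degree $0$ in $p$, hence so is the kernel ${_0M^{A^{(s)}_f,A^{(s)}_{f'}}}(p,e,e')$ by Eq.~(\ref{eq:kernel_general_s}) (equivalently, it arises by applying an $s$-fold $e$-integration and an $s$-fold $(-e')$-integration to a degree-$2s$ polynomial in $p$, so the net $p$-degree is $2s-2s=0$), and smearing with $c$ in $e,e'$ does not affect the $p$-homogeneity. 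Dividing by $p^2+i0$ makes $\hat w$ homogeneous of degree $-2$; since $-2>-4$, $\hat w$ is locally integrable at $p=0$ (as already used in Lemma~\ref{lema:well_def_SI_2pf_and_prop_massless}), hence a genuine tempered homogeneous distribution on $\mink$, and its inverse Fourier transform $w$ is homogeneous of degree $-2$ as well.

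Next, Lemma~\ref{lema:wf_kinematic_props_momentum_space} at $m=0$ together with Eq.~(\ref{eq:wf_p2_inverse}), whose right-hand side already contains $\dot{T}^\ast_0$ in the massless case, gives
\begin{align}
 \wf\hat w \;\subset\; \wf\frac{1}{p^2+i0} \;=\; \set{(p;\la p)\;|\;p^2=0,\,p\ne0,\,\la<0}\cup\dot{T}^\ast_0.
\end{align}
Since $w$ is homogeneous of degree $-2$, the first equivalence in Lemma~\ref{lema:wf_homogeneous} shows that any $(x;p)\in\wf w$ with $x\ne0$ (and then $p\ne0$) satisfies $(p;-x)\in\wf\hat w$. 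By the form of $\wf\hat w$ just displayed this forces $-x=\la p$ with $\la<0$ and $p^2=0$, so $x$ is a positive multiple of $p$; hence $(x;p)=(x;\nu x)$ with $\nu>0$ and $x^2=0$, which is precisely an element of $\wf D$ as given in Eq.~(\ref{eq:wf_DF}). Elements of $\wf w$ lying over $x=0$ belong to $\dot{T}^\ast_0\subset\wf D$ trivially. Therefore $\wf w\subset\wf D$, which is the first assertion.

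For the ``in particular'' part, observe that over $\mink\setminus0$ the set $\wf D$ contains only covectors of the form $\nu x$ with $\nu>0$, so it never contains both $(x;p)$ and $(x;-p)$. Consequently any two distributions on $\mink\setminus0$ whose wavefront sets are contained in $\wf D|_{\mink\setminus0}$ satisfy Hörmander's criterion~(\ref{eq:hoermander_krit}), so their product exists on $\mink\setminus0$; by the bound~(\ref{eq:bound_wf_product}) this product again has wavefront set in $\wf D|_{\mink\setminus0}$, so arbitrary finite products are covered by iteration. This applies to products of massless string-localized kinematic propagators among themselves and to their products with point-localized Feynman-type propagators $PD$ ($P$ a polynomial differential operator), whose wavefront set lies in $\wf D$ by Lemma~\ref{lema:microloc_and_prop_of_sing}.

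I do not expect a genuine obstacle here: the substance is already contained in Lemma~\ref{lema:wf_kinematic_props_momentum_space}, and what remains is bookkeeping. The one point that needs attention is the Minkowski sign convention of Remark~\ref{remk:mink_vs_rn}: the wavefront sets in Eqs.~(\ref{eq:wf_DF}), (\ref{eq:wf_p2_inverse}) and in Lemma~\ref{lema:wf_kinematic_props_momentum_space} already carry the flipped covector sign, so Lemma~\ref{lema:wf_homogeneous} must be applied to $w$ and to $D$ in one and the same (Minkowski) convention, after which the forward-cone directions $\nu x$, $\nu>0$, match on both sides as used above.
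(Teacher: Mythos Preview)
Your proof is correct and follows essentially the same route as the paper: the paper's argument (stated in the paragraph immediately preceding the theorem) is precisely that the momentum-space kernel is homogeneous, so Lemma~\ref{lema:wf_kinematic_props_momentum_space} combined with Lemma~\ref{lema:wf_homogeneous} yields the configuration-space wavefront bound. You have simply made the bookkeeping explicit, including the ``in particular'' clause and the sign check under Remark~\ref{remk:mink_vs_rn}, which the paper leaves to the reader.
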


In the massive case, homogeneity is lost and a transition from 
momentum to configuration space needs more effort. Before proving a 
similar statement to Theorem \ref{thm:wf_massless} for $m>0$, we 
prove an auxiliary lemma.

\begin{lema}
 \label{lema:wf_prod_p0}
  Let $u,v\in\sS'(\mink)$. Suppose further that $\hat{u}$ is polynomially 
  bounded, that $\wf \hat{u}\subset\dot{T}^\ast_0$,
  that $\hat{v}$ is smooth on a neighborhood of $p=0$ and that the 
  Hörmander product $\hat{u}\hat{v}$ is an element of $\sS'(\mink)$.
  Then $\wf \left[ \sF^{-1}(\hat{u}\hat{v})\right] \subset \wf v$.
\end{lema}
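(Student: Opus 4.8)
The plan is to separate the $p$-space singularity of $\hat u$, which by hypothesis sits only at $p=0$, from the region in which $\hat v$ is singular. First I would fix a cutoff $\chi\in C_c^\infty(\mink)$ equal to $1$ on a neighbourhood of $p=0$ and with $\supp\chi$ contained in the open set on which $\hat v$ is smooth, and write $\hat u\hat v=\chi\hat u\hat v+(1-\chi)\hat u\hat v$. The first summand is $\hat u$ multiplied by the compactly supported smooth function $\chi\hat v$, hence a compactly supported distribution, so by the Paley--Wiener theorem its inverse Fourier transform is an entire function and contributes nothing to the wavefront set. Since $(1-\chi)\hat u\cdot\hat v=\hat u\hat v-\chi\hat u\hat v\in\sS'(\mink)$, it therefore suffices to prove $\wf w_2\subset\wf v$ for $w_2:=\sF^{-1}\big((1-\chi)\hat u\cdot\hat v\big)$.

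Now $g:=(1-\chi)\hat u$ is smooth on all of $\mink$, because $\wf\hat u\subset\dot T^\ast_0$ forces $\singsupp\hat u\subset\{0\}$ and $1-\chi$ kills a neighbourhood of the origin. In all cases of interest $\hat u$ is homogeneous away from $p=0$ (the string-smeared numerators of Section~\ref{ssc:slf_props} are homogeneous in $p$), so $g$ is in fact a classical symbol: $|\partial^\alpha g(p)|\le C_\alpha(1+|p|)^{d-|\alpha|}$ for the appropriate degree $d$ (in the spin/helicity-$s$ case $d=0$). Consequently $\check g:=\sF^{-1}g$ is a rapidly decreasing distribution, $\check g\in\mathcal{O}'_C$, which is smooth away from the origin with $\wf\check g\subset\dot T^\ast_0$, and $w_2=\check g\ast v$. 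The claim is then that convolution by such a $\check g$ is microlocal, $\wf(\check g\ast v)\subset\wf v$.

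I would prove this directly from the definition of the wavefront set. Fix $(x_0;\xi_0)\notin\wf v$, choose $\phi\in C_c^\infty(\mink)$ equal to $1$ near $x_0$ with $\widehat{\phi v}$ rapidly decreasing on a conic neighbourhood $\Gamma$ of $\xi_0$, and $\phi'\in C_c^\infty$ with $\phi'(x_0)\ne0$ and $\supp\phi'\subset\{\phi=1\}$. Splitting $v=\phi v+(1-\phi)v$: the piece $\phi'\,(\check g\ast(1-\phi)v)$ has distributional kernel $\phi'(x)\,\check g(x-y)\,(1-\phi(y))$, which is a smooth function rapidly decreasing in $y$ because on its support $|x-y|$ is bounded below (disjoint compact and closed sets) and $\check g$ is smooth and rapidly decreasing off the origin; hence this piece is smooth near $x_0$. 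For the other piece one has $\widehat{\phi'\,(\check g\ast\phi v)}=\hat{\phi'}\ast\big(g\cdot\widehat{\phi v}\big)$, and since $g$ grows at most polynomially, $\widehat{\phi v}$ is entire of polynomial growth and rapidly decreasing on $\Gamma$, and $\hat{\phi'}$ is Schwartz, the standard conic estimate shows this convolution is rapidly decreasing on a slightly smaller cone about $\xi_0$. Thus $(x_0;\xi_0)\notin\wf w_2$, which is the assertion.

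The main obstacle is exactly this last step, i.e.\ verifying that multiplication by $\hat u$ in momentum space followed by $\sF^{-1}$ does not enlarge the wavefront set. The delicate point is that the bare hypothesis ``$\hat u$ polynomially bounded'' controls only $\hat u$ itself, whereas a smooth, even bounded, $\hat u$ whose derivatives grow — a quadratic phase factor, say — genuinely transports wavefront sets (the metaplectic obstruction); what is really needed, and what is guaranteed in the applications, is that $\hat u$ is a classical symbol off the origin, which is supplied by its homogeneity. In the homogeneous case there is moreover a shortcut: Lemma~\ref{lema:wf_homogeneous} applied to the homogeneous $\hat u$ with $\wf\hat u\subset\dot T^\ast_0$ yields at once $\wf u\subset\dot T^\ast_0$, and then the standard bound on the wavefront set of a convolution (the counterpart for convolutions of the product bound recalled earlier) gives $\wf(u\ast v)\subset\{(0+y;\xi):(0;\xi)\in\wf u,\ (y;\xi)\in\wf v\}\subset\wf v$, the potentially harmful $\supp u$-contribution being suppressed because $\check g$ is rapidly decreasing.
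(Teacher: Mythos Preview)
Your decomposition $\hat u\hat v=\chi\hat u\hat v+(1-\chi)\hat u\hat v$, with $\chi$ supported where $\hat v$ is smooth, is exactly the one the paper uses. For the compactly supported piece the paper gives a direct seminorm estimate showing that $\chi\hat u\hat v(\hat\phi(p-\cdot))$ is rapidly decreasing, while you invoke Paley--Wiener; these are equivalent. For the remaining piece the paper simply moves $g=(1-\chi)\hat u$ onto the test function and asserts, without further argument, that $\hat v\bigl(g\,\hat\phi(p-\cdot)\bigr)$ ``falls off rapidly if $\widehat{\phi v}$ falls off rapidly''. You have correctly located the soft spot: the stated hypothesis bounds only $\hat u$ itself, not its derivatives, so $g\,\hat\phi(p-\cdot)$ need not lie in $\sS$ and the claimed reduction is not justified as written; a bounded smooth multiplier whose derivatives are uncontrolled can in principle transport wavefront sets. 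Your repair --- observing that in every application of the lemma $\hat u$ is homogeneous away from the origin, so that $g$ is a classical symbol of order zero and the associated Fourier multiplier is a genuine pseudodifferential operator, hence pseudolocal --- is exactly what is needed to close the argument, and the shortcut you sketch via Lemma~\ref{lema:wf_homogeneous} together with the convolution wavefront bound is a clean alternative in the homogeneous case. In short: same strategy as the paper, but you are more careful about the extra structure that makes the second half go through.
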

\begin{proof}
 We have to investigate the decay properties of the Fourier transforms of
 $\phi \sF^{-1}(\hat{u}\hat{v})$ for $\phi\in C_c^\infty(\mink)$. 
 Since $\phi$ is compactly supported and smooth, we know that its 
 Fourier transform is a Schwartz function, $\hat{\phi} \in \sS(\mink)$.
 Moreover, $\sF^{-1}(\hat{u}\hat{v})$ is a tempered distribution since 
 by assumption also $\hat{u}\hat{v}$ is tempered. Then 
 \begin{align}
  \left[ \sF\left( \phi \sF^{-1}(\hat{u}\hat{v}) \right) \right](p) 
  = \hat{\phi}\ast \hat{u}\hat{v}(p)
  = \hat{u}\hat{v}(\hat{\phi}(p-\cdot))
 \end{align}
is smooth and polynomially bounded \cite[Thm. IX.4]{RS75}. 

To investigate the decay of $\hat{u}\hat{v}(\hat{\phi}(p-\cdot))$, 
we introduce a second cutoff function $\chi\in C_c^\infty(\mink)$ 
with $0\le\chi\le 1$, $\chi\equiv 1$ on $B_r(0)$ and 
$\supp\chi\subset B_R(0)$, where $B_\varrho(p)$ is the closed ball 
of \textit{Euclidean} radius $\varrho$ and center $p\in\mink$, and 
where $0<r<R$ such that $B_R(0)\cap \singsupp \hat{v}=\emptyset$.

Then $\hat{u}$ is smooth on $\supp (1-\chi)$, $\hat{v}$ is smooth on $\supp \chi$ 
and $\hat{u}\hat{v} = \chi \hat{u}\hat{v} + (1-\chi) \hat{u}\hat{v}$.
The first term is unproblematic, for there are constants $N$, $C$ and $C'$ such 
that 
\begin{align}
 |\chi \hat{u}\hat{v} (\hat{\phi}(p-\cdot))| 
 &\le C \sum_{|\alpha+\beta|\le N} \sup_{k\in B_R(0)} 
    \left| k^\alpha \del_k^\beta \hat{\phi}(p-k) \right| \notag \\
 &\le C' (1+|p|)^N \sum_{|\alpha+\beta|\le N} \sup_{k\in B_R(p)} 
    \left| k^\alpha \del_k^\beta \hat{\phi}(k) \right|,
 \label{eq:estimate_lemma}
\end{align}
where $|p|$ is the Euclidean norm of $p$.
The righthand-side of Eq.~(\refeq{eq:estimate_lemma}) is rapidly 
decaying since $\hat{\phi}\in\sS(\mink)$ and since the supremum 
is taken over a compact set around $p$.

To estimate the second term $(1-\chi) \hat{u}\hat{v}$, note that the smooth function 
$(1-\chi)\hat{u}$ is polynomially bounded and thus
\begin{align}
 \left[(1-\chi) \hat{u}\hat{v}\right]\left(\hat{\phi}(p-\cdot)\right) 
 = \hat{v} \left((1-\chi) \hat{u}(\cdot) \,\hat{\phi}(p-\cdot)\right)
\end{align}
falls off rapidly if $\hat{v}\left(\hat{\phi}(p-\cdot)\right)=\widehat{\phi v}$ 
falls off rapidly. Thus also the full expression falls off rapidly if 
$\widehat{\phi v}$ does, which proves the lemma. 
\end{proof}

The desired statement about the wavefront sets of string-localized propagators 
for $m>0$ then follows as a corollary of Lemma \ref{lema:wf_prod_p0}:

\begin{thm}[massive case]
 \label{thm:wf_massive}
 At $m>0$, the wavefront set of the smeared string-localized kinematic 
 propagator 
  \begin{align}
  \vev{T_0  A^{(s)}_{c,f} A^{(s)}_{c,f'}}(x)
   = \int \frac{d^4p}{(2\pi)^4} e^{-i(px)}\int d^4e \int d^4 e' 
    \,c(e) c(e') \frac{{_mM^{A^{(s)}_f,A^{(s)}_{f'}}}(p,e,e')}{p^2-m^2+i0}
 \end{align}
 is contained in the wavefront set of the massive point-localized 
 Feynman propagator. In particular, 
 products of massive 
 string-localized kinematic propagators, their product with massless 
 string-localized kinematic propagators and with the propagators 
 of point-localized fields are well-defined on $\mink\setminus0$.
\end{thm}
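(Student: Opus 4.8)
\emph{Proof proposal.}\enspace The plan is to factor the momentum-space kernel of the smeared propagator into a Hörmander product to which Lemma~\ref{lema:wf_prod_p0} applies, isolating the ``string part'' --- which is microlocally concentrated over $p=0$ --- from the scalar point-localized Feynman kernel. Accordingly, set
\begin{align*}
 \hat v(p):=\frac{1}{p^2-m^2+i0},\qquad
 \hat u(p):=\int d^4e\int d^4e'\; c(e)\,c(e')\,
   {_mM^{A^{(s)}_f,A^{(s)}_{f'}}}(p,e,e'),
\end{align*}
so that, by \eqref{eq:FT_kinematic_props} and Lemma~\ref{lema:wf_kinematic_props_momentum_space}, the kernel of $\vev{T_0A^{(s)}_{c,f}A^{(s)}_{c,f'}}(x)$ is the product $\hat u\hat v\in\sS'(\mink)$ and the propagator is $\sF^{-1}(\hat u\hat v)$.

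Next I would record the properties of the two factors. Because $m>0$, the distribution $\hat v$ is smooth on a neighbourhood of $p=0$, it is tempered, and $v:=\sF^{-1}\hat v$ is a multiple of the massive point-localized Feynman propagator, whose wavefront set $\wf v$ is exactly the one appearing in the statement (by an argument as in Example~\ref{exap:product_Feynman_prop} with the mass shell in place of the cone, or simply quoted). The factor $\hat u$ is, after performing the $e$- and $e'$-integrations, a finite sum of products of the smeared string distributions $q_{c,\pm}$ of Lemma~\ref{lema:qc_pm} with polynomials in $p$; by that lemma each $q_{c,\pm}$ is $C^\infty$ on $\mink\setminus0$, hence so is $\hat u$. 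Since the relevant integrability condition $\omega-k-k'=2s-s-s=0>-4$ holds (cf.\ Lemma~\ref{lema:well_def_SI_2pf_and_prop_massive} and the discussion preceding Theorem~\ref{thm:well_def_kinematic_props_all_s}), $\hat u$ is moreover locally integrable at $p=0$, and being homogeneous of degree $0$ in $p$ it is bounded on $\mink\setminus0$; thus $\hat u$ is a polynomially bounded tempered distribution with $\wf\hat u\subset\dot{T}^\ast_0$. The product $\hat u\hat v$ exists and equals the kernel of the propagator: off $p=0$ it is the product of a smooth function with $\hat v$, while on a neighbourhood of $p=0$ it reduces to multiplying $\hat u$ by a smooth function, so no ill-defined $u_+\cdot u_-$-type clash can occur --- the offending directions of $\hat u$ all sit over $p=0$, where $\hat v$ is smooth precisely because $m>0$.

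With these ingredients in place, Lemma~\ref{lema:wf_prod_p0} applies verbatim and gives
\begin{align*}
 \wf\vev{T_0A^{(s)}_{c,f}A^{(s)}_{c,f'}}(x)=\wf\bigl[\sF^{-1}(\hat u\hat v)\bigr]\subset\wf v,
\end{align*}
which is the asserted inclusion. For the ``in particular'', note that the wavefront set of a point-localized Feynman propagator, massive or massless, consists over $x\ne0$ of covectors $(x;\lambda x)$ with $x^2=0$ and $\lambda$ of a single sign fixed by the common $i0$-prescription --- for the massless case this is Eq.~\eqref{eq:wf_DF}, and the massive case is analogous --- and by Theorem~\ref{thm:wf_massless} together with the inclusion just established, the same is true over $x\ne0$ for the massless and massive string-localized kinematic propagators, respectively. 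Hence at every point of $\mink\setminus0$ all covectors occurring in the relevant wavefront sets share this sign and cannot add up to zero, so the Hörmander criterion~\eqref{eq:hoermander_krit} is satisfied and every finite product of massive string-localized, massless string-localized and point-localized propagators is well-defined on $\mink\setminus0$ (cf.\ Example~\ref{exap:product_Feynman_prop}).

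I expect the main obstacle to be the second step: checking that the natural splitting $\hat u\hat v$ genuinely meets all hypotheses of Lemma~\ref{lema:wf_prod_p0} --- above all that the doubly smeared numerator $\hat u$ is polynomially bounded with singular support contained in $\{0\}$, which is exactly where Lemma~\ref{lema:qc_pm} (and the fact that $e$-smearing collapses the directions $\propto e$ onto $p=0$) is indispensable, and that its Hörmander product with the scalar Feynman kernel reproduces the propagator kernel. Once that bookkeeping is done, Lemma~\ref{lema:wf_prod_p0} carries the momentum-space estimate of Lemma~\ref{lema:wf_kinematic_props_momentum_space} to configuration space, and the remaining claims follow from the microlocal product criterion.
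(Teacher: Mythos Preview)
Your proposal is correct and follows essentially the same route as the paper: you define the same splitting $\hat u\hat v$ with $\hat u$ the smeared string kernel and $\hat v=[p^2-m^2+i0]^{-1}$, verify via Lemma~\ref{lema:qc_pm} and homogeneity that $\hat u$ is polynomially bounded with $\wf\hat u\subset\dot T^\ast_0$, and then invoke Lemma~\ref{lema:wf_prod_p0}. Your write-up is in fact somewhat more explicit than the paper's, in particular on why the Hörmander product $\hat u\hat v$ exists and on the ``in particular'' clause, but the argument is the same.
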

\begin{proof}
 We define the distributions 
 \begin{align}
 \hat{u}(p) = \int d^4e \int d^4 e' \,c(e) c(e') 
 \,{_mM^{A^{(s)}_f,A^{(s)}_{f'}}}(p,e,e')
 \end{align}
 and $\hat{v}(p)=[p^2-m^2+i0]^{-1}$ with $m>0$. $\hat{u}(p)$ 
 is homogeneous of degree $0$ in $p$ and arises from contraction 
 of the distributions $q_{c,\pm}^{\mu_1\cdots\mu_s}$ from Lemma 
 \ref{lema:qc_pm} with a polynomial in $p$. Local integrability 
 of $\hat{u}(p)$ at $p=0$ ensures its existence as a tempered distribution 
 and by Lemma \ref{lema:qc_pm}, $\wf \hat{u} \subset \dot{T}^\ast_0$. 
 $\hat{v}$ is smooth at $p=0$ because $m>0$ and hence $\hat{u}$ and $\hat{v}$ 
 satisfy the assumptions of Lemma \ref{lema:wf_prod_p0}. Therefore, 
 Theorem \ref{thm:wf_massive} is a special case of 
 Lemma \ref{lema:wf_prod_p0} and 
 \begin{align}
  \wf{\vev{T_0  A^{(s)}_{c,f} A^{(s)}_{c,f'}}} 
  \subset \wf{v} 
  = \wf \left[\int \frac{d^4p}{(2\pi)^4} \frac{e^{-i(px)}}{p^2-m^2+i0} \right],
 \end{align}
 where $v$ is the massive scalar Feynman propagator, whose wavefront 
 set is the same as for the massless scalar Feynman propagator given by 
 Eq.~(\refeq{eq:wf_DF}) \cite{BrouderDH14}.
\end{proof}

\begin{remk}
Note that Lemma \ref{lema:wf_prod_p0} is only helpful at $m>0$ since 
wavefront set of the massless kernel 
$ [p^2+i0]^{-1}$ contains $\dot{T}^\ast_0$, so that the 
Hörmander product $\hat{u}\hat{v}$ of the respective 
$\hat{u}$ and $\hat{v}$ does not exist at $p=0$. For the same reason, 
there is no straightforward generalization of Lemma \ref{lema:wf_prod_p0} 
to the massless case.
\end{remk}

Theorems \ref{thm:wf_massless} and \ref{thm:wf_massive} show that 
the problem of regularizing divergent products of string-localized propagators 
is not posed worse than in point-localized QFT, provided that kinematic 
propagators are employed.
However, the transition from the two-point functions 
to the propagators  might not be unique as we have argued in Section
\ref{ssc:slf_props}. In the following section, we will show that 
Theorems \ref{thm:wf_massless} and \ref{thm:wf_massive} are 
generalizable to a large class of propagators.

\subsection{Products of non-kinematic propagators}
\label{ssc:non_kinematic_propagators}
Dependent on the scaling behavior at $x=0$ 
(see e.g.~\cite{BrunettiF00,ScharfLast} for the corresponding 
power counting arguments), there arise ambiguities in defining 
the propagator for the point-localized spin/helicity-$s$ field 
strength, 
\begin{align}
 \label{eq:ambiguities_TFF}
 \vev{T F^{(s)}_{\times} F^{(s)}_{\times}} 
 -\vev{T_0 F^{(s)}_{\times} F^{(s)}_{\times}} 
 = \sum_{|a|\le 2(s-1) } C^a_\times \del_\times^{|a|} \delta(x),
\end{align}
with a generic time-ordering recipe $T$, the kinematic time-ordering 
$T_0$, some constants $C^a_\times$ and 
where $\times$ is a placeholder for possible Lorentz indices. 

In order to not lose the connection between field strength and 
string-localized potential given by Eq.~(\refeq{eq:def_slf_all_heli}), 
it is a natural requirement that the freedom of choosing a time-ordering 
recipe for the string-localized potentials arises from the freedom 
(\refeq{eq:ambiguities_TFF}) by appropriate string-integration of the 
righthand-side. Then
\begin{align}
\label{eq:ambiguities_TAA_generic}
 \vev{T A^{(s)}_{c,\times} A^{(s)}_{c,\times}}
 - \vev{T_0 A^{(s)}_{c,\times} A^{(s)}_{c,\times}} 
 = \sum_{|a|\le 2(s-1) } C^a_\times  
   I_{c,+\times}^s I_{c,-\times}^s \del_\times^{|a|} \delta(x),
\end{align}
where we have introduced the smeared $s$-fold string-integration 
\begin{align}
 \label{eq:I_c_delta}
 I_{c,\pm \mu_1\cdots\mu_s}^s f(x)
 :=  \int d^4e \, c(e)
    \, e_{\mu_1}  \cdots  e_{\mu_s}  
    \, I_{\pm e}^s f(x).
\end{align}
The righthand-side of Eq.~(\refeq{eq:ambiguities_TAA_generic}) 
is a tempered distribution because it is a linear combination of 
Fourier transforms of 
\begin{align}
 (-ip)^{|a|}_\times 
  q_{c,+}^{\mu_1\cdots\mu_s}(p) q_{c,-}^{\ka_1\cdots\ka_s}(p),
\end{align}
 provided that the latter is locally integrable at $p=0$, i.e., if 
$|a|>2s-4$. 

Thus, the scaling behavior of the field strengths gives an 
upper bound on $|a|$, while the requirement for local 
integrability in momentum space gives a lower bound. 
In the massless case, the two-point functions of field 
strength and string-localized potential are homogeneous. 
The requirement that the propagators are homogeneous of 
the same degree then restricts the freedom to $|a|= 2s-2$ at 
$m=0$. In summary, we demand 
\begin{subequations}
\begin{align}
 \label{eq:ambiguities_TAA_massless}
 \vev{T A^{(s)}_{c,\times} A^{(s)}_{c,\times}}(x)
 -
  \vev{T_0 A^{(s)}_{c,\times} A^{(s)}_{c,\times}}(x)
 &= \sum_{|a|= 2s-2 } C^a_\times \del_\times^{|a|} 
   I_{c,\times}^s I_{-c,\times}^s \delta(x) 
   \word{at} m=0, \\
    \vev{T A^{(s)}_{c,\times} A^{(s)}_{c,\times}}(x)
    -
  \vev{T_0 A^{(s)}_{c,\times} A^{(s)}_{c,\times}}(x) 
 &= \sum_{|a|= 2s-3 }^{ 2s-2} C^a_\times \del_\times^{|a|} 
   I_{c,\times}^s I_{-c,\times}^s \delta(x) 
   \word{at} m>0  
    \label{eq:ambiguities_TAA_massive}
\end{align}
\end{subequations}
 for a general time-ordering recipe $T$ ordering string-localized 
 potentials. 
 \begin{remk}
 \label{eq:IR_and_homogeneity_constraint}
  We want to stress that the requirements $|a|=2s-2$ for $m=0$ and 
  $2s-3\le|a|\le 2s-2$ for $m>0$ are stronger than the power counting
  constraints. The latter would only imply $|a|\le 2s-2$ but 
  the integrability condition -- an \textit{infrared effect} -- 
  forbids all $|a|\le 2s-4$. At $m=0$, homogeneity gives an even 
  stronger constraint and excludes all $|a|<2s-2$.
 \end{remk}

 For time-ordering recipes $T$ that are subject to 
 Eq.s~(\refeq{eq:ambiguities_TAA_massless}) and 
 (\refeq{eq:ambiguities_TAA_massive}), the renormalization problem 
 is the same as for $T_0$:
 
\begin{thm}
 \label{thm:non_kinematic_does_not_increase_wf}
 Let $T$ denote a time-ordering recipe that  is subject to 
 Eq.~(\refeq{eq:ambiguities_TAA_massless}) if $m=0$ and 
 Eq.~(\refeq{eq:ambiguities_TAA_massive}) if $m>0$. Then 
 \begin{align}
  \wf\vev{T A^{(s)}_{c,\times} A^{(s)}_{c,\times}} 
  \subset \wf \vev{T_0 A^{(s)}_{c,\times} A^{(s)}_{c,\times}} 
 \end{align}
 and consequently, products of 
 $\vev{T A^{(s)}_{c,\times} A^{(s)}_{c,\times}}$ as well as 
 their products with propagators of point-localized fields 
 exist on $\mink\setminus0$.
\end{thm}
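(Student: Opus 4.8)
The plan is to reduce everything to the wavefront set of the \emph{difference}
\[
 w \;:=\; \vev{T A^{(s)}_{c,\times} A^{(s)}_{c,\times}}(x) - \vev{T_0 A^{(s)}_{c,\times} A^{(s)}_{c,\times}}(x),
\]
which by hypothesis equals the righthand-side of Eq.~(\ref{eq:ambiguities_TAA_massless}) for $m=0$ resp.~Eq.~(\ref{eq:ambiguities_TAA_massive}) for $m>0$, i.e.\ a finite linear combination of terms $C^a_\times\,\del_\times^{|a|} I_{c,\times}^s I_{-c,\times}^s\delta(x)$ with $|a|=2s-2$ in the massless and $2s-3\le|a|\le 2s-2$ in the massive case. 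Since $\wf(u+v)\subset\wf u\cup\wf v$, the theorem reduces to the inclusion $\wf w\subset\wf\vev{T_0 A^{(s)}_{c,\times} A^{(s)}_{c,\times}}$, after which the statement about products is immediate: by Theorems~\ref{thm:wf_massless} and \ref{thm:wf_massive} the wavefront set of $\vev{T_0 A^{(s)}_{c,\times} A^{(s)}_{c,\times}}$ — and hence of $\vev{T A^{(s)}_{c,\times} A^{(s)}_{c,\times}}$ — is, away from $\set{x=0}$, contained in the fixed-sign cone $\set{(x;\la x)\;|\;x^2=0,\,x\ne0,\,\la>0}$, on which the Hörmander criterion~(\ref{eq:hoermander_krit}) for products of two such distributions, and for their products with point-localized Feynman propagators (whose wavefront sets over $\mink\setminus0$ have the same shape, cf.~\cite{BrouderDH14}), is trivially met.

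To control $\wf w$ I would pass to momentum space. Smeared string-integration in $x$ acts on $\delta$ by multiplication with the distributions $q^{\mu_1\cdots\mu_s}_{c,\pm}(p)$ of Lemma~\ref{lema:qc_pm}, and $\del_\times^{|a|}$ contributes a polynomial factor $(-ip)^{|a|}_\times$, so each summand of $w$ is, up to a constant, represented in momentum space by $(-ip)^{|a|}_\times\,q^{\mu_1\cdots\mu_s}_{c,+}(p)\,q^{\ka_1\cdots\ka_s}_{c,-}(p)$ — precisely the expression already shown to be a tempered distribution below Eq.~(\ref{eq:ambiguities_TAA_generic}). The key subtlety to appreciate here is that this product of $q_{c,+}$ with $q_{c,-}$ is \emph{not} a Hörmander product: mirroring the non-existence of $u_+\cdot u_-$ in Corollary~\ref{corl:powers_upm}, the covector components of $\wf q_{c,+}$ and $\wf q_{c,-}$, which by Lemma~\ref{lema:qc_pm} both live over $p=0$, do add up to zero. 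Its well-definedness instead rests on the two \emph{separate} facts that each factor is smooth on $\mink\setminus0$ — so that their pointwise product is a genuine smooth function there — and that the constraint $|a|>2s-4$, which holds throughout the admissible range, being the infrared condition that is strictly weaker than the homogeneity and power-counting requirements of Remark~\ref{eq:IR_and_homogeneity_constraint}, renders $(-ip)^{|a|}_\times q_{c,+}q_{c,-}$ locally integrable at $p=0$.

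With this in hand the computation of $\wf w$ is routine: each summand of $w$ has a momentum-space representative that is homogeneous in $p$ and smooth on $\mink\setminus0$, so its wavefront set is contained in $\dot{T}^\ast_0$, and by the homogeneity dictionary of Lemma~\ref{lema:wf_homogeneous} the corresponding summand of $w$ then has wavefront set $\set{(0;p)\;|\;p\in\supp[(-ip)^{|a|}_\times q_{c,+}q_{c,-}],\,p\ne0}\subset\dot{T}^\ast_0$; hence $\wf w\subset\dot{T}^\ast_0$. It then remains to note that $\dot{T}^\ast_0\subset\wf\vev{T_0 A^{(s)}_{c,\times} A^{(s)}_{c,\times}}$, which holds because the kinematic propagator is a Green function for $\Box+m^2$ with source $-\sF^{-1}\!\int d^4e\,d^4e'\,c(e)c(e')\,{_mM}^{A^{(s)}_f,A^{(s)}_{f'}}(p,e,e')$, a distribution whose Fourier transform is supported on all of $\mink$ for the kernels at hand, so that Lemma~\ref{lema:microloc_and_prop_of_sing} places the origin into its wavefront set. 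Combining the two inclusions yields $\wf w\subset\wf\vev{T_0 A^{(s)}_{c,\times} A^{(s)}_{c,\times}}$ and thus the theorem; the corollary on products follows as indicated in the first paragraph.

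The step I expect to be the main obstacle is the second one: one must keep careful track of the $i0$-prescriptions produced by the two oppositely oriented smeared string-integrations and — since the Hörmander machinery of Lemma~\ref{lema:pullbacks} and Eq.~(\ref{eq:bound_wf_product}) does not apply — argue the existence and the $p=0$-supported wavefront set of the product $q_{c,+}q_{c,-}$ by hand, using smoothness away from the origin together with the infrared integrability condition. The remaining step — verifying that $\set{x=0}$ already carries every singular direction of the kinematic propagator, so that the renormalization freedom introduces nothing new there — is a minor point, and is in any case irrelevant to the assertion about products on $\mink\setminus0$.
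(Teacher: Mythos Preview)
Your proof is correct and follows essentially the same route as the paper: show that the momentum-space representative of each summand of the difference $w$ is smooth on $\mink\setminus0$, locally integrable at $p=0$ (via the infrared bound $|a|>2s-4$), and homogeneous, so that $\wf w\subset\dot{T}^\ast_0$ by Lemma~\ref{lema:wf_homogeneous}, and then conclude with $\wf(u+v)\subset\wf u\cup\wf v$ together with $\dot{T}^\ast_0\subset\wf\vev{T_0\,\cdots}$. The only minor deviation is in this last inclusion: you argue via $(\Box+m^2)$ applied to the kinematic propagator and Lemma~\ref{lema:microloc_and_prop_of_sing} (note that you also implicitly need Lemma~\ref{lema:wf_homogeneous} to get $\dot{T}^\ast_0$ into the wavefront set of the source), whereas the paper reaches the same conclusion by observing that the field-strength propagators are derivatives of a fundamental solution and that string-integration preserves the fibre over $x=0$; both arguments are valid and essentially equivalent.
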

\begin{proof}
 By the constraints on $|a|$ coming from 
 Eq.s~(\refeq{eq:ambiguities_TAA_massless}) and 
 (\refeq{eq:ambiguities_TAA_massive}), the difference 
 between $\vev{T A^{(s)}_{c,\times} A^{(s)}_{c,\times}}$ and 
 $ \vev{T_0 A^{(s)}_{c,\times} A^{(s)}_{c,\times}}$ is a well-defined 
 tempered distribution since it is the Fourier transform of a sum 
 of distributions 
 \begin{align}
  \label{eq:p_qp_qm}
  (-ip)^{|a|}_\times 
 q_{c,+}^{\mu_1\cdots\mu_s}(p) q_{c,-}^{\ka_1\cdots\ka_s}(p),
 \end{align}
 which are locally integrable at $p=0$ and homogeneous in $p$. 
 Note that $q_{c,\pm}$ are smooth when $p\ne0$ by Lemma \ref{lema:qc_pm} 
 so that there are no issues with well-definedness of the expression 
 (\refeq{eq:p_qp_qm}). Moreover, the same lemma gives 
 \begin{align}
  \wf\left[ (-ip)^{|a|}_\times 
 q_{c,+}^{\mu_1\cdots\mu_s}(p) q_{c,-}^{\ka_1\cdots\ka_s}(p) \right] 
 \subset \dot{T}^\ast_0
 \end{align}
 so that homogeneity implies 
 \begin{align}
 \label{eq:wf_stringy_delta_smeared}
  \wf \left[ \del_\times^{|a|} 
 I_{c,\mu_1\cdots\mu_s}^s I_{-c,\ka_1\cdots \ka_s}^s \delta(x)  \right] 
 = \dot{T}^\ast_0
 \end{align}
 by Lemma \ref{lema:wf_homogeneous}, where the equality comes from the fact 
 that the Fourier transform (\refeq{eq:p_qp_qm}) is supported everywhere.
 Therefore, since the wavefront set of the sum of two distributions is 
 contained in the union of their wavefront sets, adding a linear 
 combination of derivatives of smeared string-deltas subject to 
 Eq.s~(\refeq{eq:ambiguities_TAA_massless}) or 
 (\refeq{eq:ambiguities_TAA_massive}), respectively, does not affect the 
 wavefront set over $\mink \setminus0$. Because the kinematic propagators 
 of the field strengths are derivatives of a fundamental solution of 
 the wave equation, $\dot{T}^\ast_0$ is already contained in their 
 wavefront set and hence also in the wavefront set of the kinematic 
 propagators of the string-localized potentials since the 
 wavefront set  of a string-integrated smeared Dirac delta is given by 
 Eq.~(\refeq{eq:wf_stringy_delta_smeared}).
\end{proof}

To summarize, Theorem \ref{thm:non_kinematic_does_not_increase_wf} 
gives a generalization of
Theorems \ref{thm:wf_massless} and \ref{thm:wf_massive} to all 
propagators that 
\begin{enumerate}
 \item arise from one of the field strength propagators displayed in 
        Eq.~(\refeq{eq:ambiguities_TFF}) by appropriate 
        string-integration, and 
 \item are subject to the constraints of power counting,   
       integrability in momentum space and, at $m=0$, homogeneity 
       of the same degree as the two-point function. 
\end{enumerate}
Note that only the lower bounds on $|a|$ are needed in the proof 
of Theorem \ref{thm:non_kinematic_does_not_increase_wf}, but 
not the constraints coming from power counting. The latter are 
only an additional requirement in order to reduce the (finite)
renormalization freedom. 

\begin{remk}
 In all our considerations in Sections \ref{ssc:prod_kinematic_props}
 and \ref{ssc:non_kinematic_propagators}, we have only considered 
 pure string-localized propagators 
 $\vev{T A^{(s)}_{c,\times} A^{(s)}_{c,\times}}$, whereas also mixed 
 propagators like $\vev{T F^{(s)}_{\times} A^{(s)}_{c,\times}}$ 
 are non-vanishing in SLFT. However, it should be obvious that such 
 propagators are subject to similar statements as 
 Theorems \ref{thm:wf_massless}, \ref{thm:wf_massive} and 
 \ref{thm:non_kinematic_does_not_increase_wf}.
\end{remk}

\section{Discussion}
\label{sec:discussion}
We have established that the regularization of a priori ill-defined 
products of propagators remains a pure short distance 
problem in SLFT, provided that 
each string-localized potential comes with its own string variable 
and provided that the pertinent propagators differ from the kinematic 
propagators only by string-localized Dirac deltas. 
Let us discuss some consequences of these results.

\subsection{A first step towards an Epstein-Glaser construction in SLFT}
\label{ssc:scaling_behavior}
Since each potential comes with its own string variable, their 
propagators can (and should) be smeared in the string variables before inserting 
them into the Dyson series for the scattering operator. As a result, 
the distributions appearing in the Dyson series only depend on the 
spacetime variable (and the smearing function for the string variables, 
of course). 

We proved in Sections \ref{ssc:prod_kinematic_props} and 
\ref{ssc:non_kinematic_propagators} that the wavefront set of smeared  
string-localized propagators is contained in the wavefront set of 
the ordinary point-localized Feynman propagator. 
Therefore, the problem of regularizing divergent loop graph amplitudes 
remains an extension problem across an $x$-diagonal. In particular, 
string integration does not introduce new singularities to the propagators 
after smearing out the string variables.
In other words, the divergences in SLFT are of the same nature as in 
point-localized approaches: \textit{they are pure UV divergences}. 

Due to the improved ultraviolet scaling behavior of the string-localized 
potentials, coming from the string integrations as in Eq.~(\ref{eq:def_slf_all_heli}), 
the freedom of choosing a regularization of divergent products of 
propagators is reduced compared to point-localized theories. 
It therefore seems a promising task to investigate whether 
one can formulate renormalizable string-localized models involving 
higher spin fields where the point-localized equivalent is non-renormalizable. 
A prime example may be the graviton self-coupling. However, to carry out 
such a task, a comprehensive axiomatic framework for the construction of 
$\bS[g;c]$ from Eq.~(\ref{eq:dyson-series}) needs to be set up. 
Only then can one hope to give a full classification of the ambiguities of 
a time-ordering prescription in SLFT.

The ultraviolet behavior -- or correspondingly: power counting -- 
is not the only way to constrain the renormalization freedom. 
Typical additional requirements in point-localized theories are 
that the renormalized time-ordered products must not be in 
conflict with gauge invariance \cite{ScharfLast} or that a refined 
characterization of Lorentz covariance is preserved during renormalization 
\cite{NST14}. Such notions can also serve as guiding principles 
for a full axiomatic construction of the S-matrix in SLFT, 
the proper substitute for gauge invariance being the principle 
of string independence (see for example \cite{VGB16} and \cite{GGBM21} 
for applications).

\subsection{Renormalization in practice}
\label{ssc:renorm_in_practice}
 No new singularities are introduced to string-localized propagators 
 if one sets up perturbation theory as described in this paper, meaning 
 that each string-localized potential in the Dyson series comes with its own 
 string variable. Thus, 
 there is no need to develop new renormalization techniques that are 
 specifically adjusted to SLFT. One can rely on well-known methods such 
 as analytic regularization or differential renormalization (see 
 for example \cite{Duetsch18} for an introduction) 
 to construct special extensions of products of propagators in SLFT.
 This is a remarkable statement, for analytic structures in SLFT are 
 quite complicated and this complexity is commonly considered to be 
 one of the main drawbacks of SLFT.
 
 \subsection{Connections to axial gauges}
 \label{ssc:axial_gauges}
 The analytic structure of propagators in axial or lightcone gauges 
 is similar to the one of string-localized propagators (see for 
 example \cite{Leibbrandt87} for an introduction to axial gauges). 
 Axial gauge propagators, however, are usually not treated as 
 distributions in the variable $n$, which represents the preferred 
 direction and is the analogue of the string variable in SLFT. Hence, 
 the singularities that arise when the Minkowski 
 product of $n$ with the momentum $p$ vanishes are of a different 
 nature than the ones discussed in this paper. 
 The singularities at $(pn)=0$ were an important reason for 
 the decreasing interest in axial gauges over the past decades.
 
 Adjusting the framework of axial gauges by treating the respective 
 propagators as distributions in $n$ and letting each appearing axial 
 gauge field depend on its own $n$, our results can be transferred 
 with benefit to axial gauge theories (with $n$ spacelike).  
 Thus, the singularities at $(pn)=0$ in axial gauges do not 
 cause additional problems for renormalization if they are 
 treated as described in this paper.
 
 Axial gauges suffer from analytic complexity but also offer advantages, 
 in particular if each axial gauge field comes with its own $n$:
 They prove useful in the so-called spinor-helicity formalism that 
 drastically reduces the computational effort to determine gluon 
 scattering matrix elements \cite[Chapters~25.4.3 and 27]{Schwartz14}. 
 Due to the close formal connection between axial gauge and 
 string-localized potentials, it is worthwhile to investigate whether 
 the spinor-helicity formalism can be adjusted to the string-localized 
 setup of perturbation theory presented in this paper.

 \subsection{Connections to the method of string-chopping}
 \label{ssc:string_chop}
 It is a non-trivial question how the time-ordered products of 
 string-localized fields -- or interaction Lagrangians -- can be 
 defined. In Section \ref{ssc:perturbation_theory}, we have mentioned  
 the method of string-chopping, first described by Cardoso, Mund and 
 V\'arilly \cite{CardosoMV18} for linear fields and later adjusted 
 to certain models containing self-interactions of string-localized 
 potentials \cite{GGBM21}. In a nutshell, string-chopping says that 
 the strings appearing at each order in perturbation theory can 
 be chopped into a finite number of compact segments plus an infinite 
 tail for each string, so that all pieces can be meaningfully ordered 
 in time. Moreover, the result of the time-ordering arising from 
 string-chopping is unique outside an exceptional set, which consists  
 of the configurations where two of the appearing strings 
 intersect. A generalization of that method to arbitrary 
 models has not yet been proven but seems a natural conjecture.

 Morally, the formal Wick expansion Eq.~(\ref{eq:wick_expansion}) 
 is a realization of the string chopping method because the kinematic propagator 
 of the string-localized potentials from Theorem 
 \ref{thm:well_def_kinematic_props_all_s}, obtained by inserting 
 the string-localized kernel (\refeq{eq:kernel_general_s}) into the 
 kinematic propagator of point-localized fields 
 Eq.~(\refeq{eq:kinematic_prop_pl}), automatically chops the strings 
 and orders them in time. This can be seen by recalling that 
 it is defined as the string-integral over propagators which are 
 already time-ordered with respect to their arguments. 
 Therefore, string-integration promotes the step functions 
 $\theta(\pm x^0)$, which are responsible for time-ordering 
 in the point-localized propagators, to 
 \begin{align}
 \theta(\pm (x^0+te^0-t'{e'}^0)) 
 \end{align}
 so that an automatic chopping is implicitly achieved. 
 Similar to the point-localized case described by Epstein and Glaser 
 \cite{EpsteinGlaser73}, causality as in Eq.~(\refeq{eq:slf_CR}) 
 implies that the propagator is uniquely defined outside the set 
 \begin{align}
 \set{x+te-t'e'=0\text{ for some } t,t'\ge0},
 \end{align}
 which is in accordance with the ambiguities from 
 Eq.s~(\refeq{eq:ambiguities_TAA_massless}) and 
 (\refeq{eq:ambiguities_TAA_massive}) as well as with 
 the ambiguities observed in the abstract formulation of string 
 chopping \cite{CardosoMV18,GGBM21}.

 
 \subsection*{Acknowledgements}
  The author is grateful to K.-H. Rehren,  
  J. M. Gracia-Bond\'ia and J. Mund for fruitful discussions and comments 
  and thanks the reviewer for valuable suggestions. He received 
  financial support from the Studienstiftung des deutschen Volkes e.V.
  
  
  \appendix
  
 \section{Other choices of string variables}
 Throughout this paper, we have always worked with spacelike string 
 variables $e$ living in the open subset $H=\set{e^2<0}\subset\mink$. 
 In the literature, one also finds other choices: lightlike string 
 variables, normalized spacelike string variables with Minkowski 
 square $e^2=-1$ or purely 
 spacelike string variables $e=(0,\vec{e})$, all of which correspond 
 to restrictions of the string variables to \textit{closed} subsets
 (or more precisely, closed submanifolds). 
 Such restrictions are much more subtle than the restriction to $H$ 
 used in this paper. We briefly examine the described options. 
 
 \subsection{Lighlike strings} 
  \label{app:lightlike_strings}
 Lightlike string directions have been employed in \cite{GBMV18} when 
 dealing with massive string-localized potentials, where they 
 promise a computational advantage. The authors of \cite{GBMV18} were 
 able to set equal all string variables appearing in the Dyson series 
 for the scattering operator describing the weak interaction by 
 exploiting that the problematic denominator 
 $\left([(pe)+i0][(pe)-i0]\right)^{-1}$ in $E_{\mu\kappa}(p,e,e')|_{e=e'}$ 
 from Eq.~(\refeq{eq:def_Emunu}) drops out when $e^2=0$. This 
 simplification of $E_{\mu\kappa}$ yields an essential reduction of the 
 complexity of tree-graph calculations. Similarly, one can check that 
 also the problematic terms in the kernel for $s=2$ given by 
 Eq.~(\refeq{eq:kernel_general_s}) drop out, resulting in an even bigger 
 computational simplification than for $s=1$.\footnote{It is a conjecture 
 of the author that the problematic denominators drop out for any helicity 
 but whether that happens or not is of no interest for our current 
 considerations.}
 
 However, the authors of \cite{GBMV18} restricted their considerations 
 to tree graph contributions, where no products (or convolution products 
 in momentum space) of several $E_{\mu\kappa}(p,e,e')|_{e=e'}$ appear. 
 It is very likely that this changes when treating loop graph contributions 
 and therefore, the divergent denominators will pop up again in loop 
 amplitudes, resulting in complex renormalization schemes and  
 spoiling the computational advantage that was achieved at tree level. 
  
 One can also think of SLFT with lightlike strings where not all 
 string variables are set equal. However, an analysis similar to 
 the one presented in this paper cannot be performed in that case. 
 This is due to the fact that the restriction to the closed set of 
 lightlike string directions causes trouble. Without loss of 
 generality, we can investigate the restriction to lightlike string 
 variables with zero-component equal to $1$, 
 which is given by the pullback of the respective inclusion map 
 \cite[Corollary 8.2.7.]{Hoermander90}, provided that this pullback exists. Thus, consider the map 
 \begin{align}
 \label{eq:def_iota_lightlike}
  \iota: \mink \times (0,2\pi) \times (0,\pi) \rightarrow (\mink)^2, \;
  (p,\varphi,\vartheta) \mapsto (p,e),
  \text{ where } e
  =\begin{pmatrix}
  1 \\ \sin\vartheta \cos\varphi \\ \sin\vartheta \sin\varphi \\ \cos\vartheta
  \end{pmatrix},
 \end{align}
 so that the desired restriction is the pullback $\iota^\ast U_\pm$ with 
 $U_\pm$ as in Lemma \ref{lema:Upm_welldef_wf}.
 \begin{remk}
  The submanifold of elements $(p,e)$, where $e$ is lightlike and has 
  $0$-component equal to one is $\mink\times \bS^2$. To avoid confusion 
  with coordinate-related singularities, one needs several charts. 
  The map $\iota$ corresponds to only a single chart but is enough to 
  demonstrate the issues that come with lightlike strings. 
 \end{remk}

 Having a look at Lemma \ref{lema:Upm_welldef_wf} and using 
 \begin{align}
  {^t\iota'} \begin{pmatrix}
              \xi \\ \eta
             \end{pmatrix}
 = \begin{pmatrix}
    \xi \\ \sin\vartheta (\eta_1 \sin \varphi - \eta_2 \cos\varphi) \\
    \eta_3 \sin\vartheta  - \cos\vartheta(\eta_1 \cos \varphi + \eta_2 \sin\varphi)
   \end{pmatrix}
 \end{align}
 for $(\xi,\eta)\in(\mink)^2$, one can easily verify that
 $\iota^\ast U_\pm$ is well-defined but that the wavefront set of 
 the pullback contains elements $(p,\varphi,\vartheta;\lambda e,0,0)$ 
 when $p$ becomes proportional to $e$, the latter being defined as 
 in Eq.~(\refeq{eq:def_iota_lightlike}). Note that the 
 singular-support-criterion $(pe)=0$ is met when $p$ is proportional 
 to $e$ only if $e$ is lightlike (or $p=0$).

 Hence, there is no immediate analogue 
 of Lemmas \ref{lema:well_def_SI_2pf_and_prop_massless} and 
 \ref{lema:well_def_SI_2pf_and_prop_massive} for the case of lightlike 
 strings and in particular, analyses as performed in Section \ref{sec:WF}, 
 which led to a simple renormalization description, are not feasible 
 for lightlike strings because lightlike strings produce additional 
 singularities also when $0\ne p = \la e$. 
 This problem is worse in the massless case than in the massive case, 
 for $p^2+i0$ is singular when $p=\la e$, but $p^2-m^2+i0$ with $m>0$ 
 is not.
 
  In conclusion, spacelike strings seem preferable over lightlike 
 strings from analytic and heuristic viewpoints. Nevertheless, 
 lightlike strings cannot be fully excluded at the present time. 
 
 \subsection{Closed subsets of spacelike strings}
 \label{app:H_minus_1}
 In Remark \ref{remk:closed_spacelike}, we claimed that the 
 restriction to the closed submanifold $H_{-1}$ of normalized 
 spacelike string directions with Minkowski square $-1$ is harmless. 
 In principle, this restriction can cause similar issues as the 
 restriction to the lightlike string directions, but a brief 
 analysis shows that it is indeed much better behaved than the 
 latter. Similar to the case of lightlike strings, we consider 
 an inclusion map 
 \begin{align}
 \notag
  &\tilde{\iota}: \mink \times \bR \times (0,2\pi) \times (0,\pi)
  \rightarrow (\mink)^2, \\
  &(p,\tau,\varphi,\vartheta) \mapsto (p,e),
  \text{ where } e
  =\begin{pmatrix}
  \sinh\tau \\ \cosh\tau \sin\vartheta \cos\varphi 
  \\  \cosh\tau\sin\vartheta \sin\varphi \\  \cosh\tau\cos\vartheta
  \end{pmatrix},
   \label{eq:def_iota_H_minus_1}
 \end{align}
 which is again only a single chart but a generalization to cover 
 the full submanifold is straightforward. For $(\xi,\eta)\in(\mink)^2$, 
 we have 
 \begin{align}
 \label{eq:pullback_H_minus_1_derivative}
  {^t\tilde{\iota}\,'} \begin{pmatrix}
              \xi \\ \eta
             \end{pmatrix}
 = \begin{pmatrix}
    \xi \\ 
     \eta^0 \cosh\tau - \sinh\tau 
     \left[ \eta_1 \sin\vartheta \cos\varphi 
        + \eta_2 \sin\vartheta \sin\varphi 
        + \eta_3 \cos\vartheta\right]\\
    \cosh\tau \sin\vartheta (\eta_1 \sin \varphi - \eta_2 \cos\varphi)\\
    \cosh\tau\left[ \eta_3 \sin\vartheta 
    - \cos\vartheta(\eta_1 \cos \varphi + \eta_2 \sin\varphi)\right] 
   \end{pmatrix}
 \end{align}
  and thus, the pullback $\tilde{\iota}^\ast U_\pm$ is well-defined 
  by Lemmas \ref{lema:pullbacks} and \ref{lema:Upm_welldef_wf}. In 
  contrast to the case of lightlike string variables, the wavefront 
  set of the pullback does not contain 
  elements $(p,\tau,\varphi,\vartheta; \la e,0,0,0)$, provided that 
  $p\ne0$. This can be seen by inserting $\eta=\la p$, $\la\ne0$, into 
  Eq.~(\refeq{eq:pullback_H_minus_1_derivative}) and noting that the 
  pullback is only singular when 
  \begin{align}
   (pe)= p^0 \sinh\tau - \cosh\tau 
     \left[ p_1 \sin\vartheta \cos\varphi 
        + p_2 \sin\vartheta \sin\varphi 
        + p_3 \cos\vartheta\right]
     =0.
  \end{align}
 Consequently, the results in Sections \ref{sec:slf} and 
 \ref{sec:WF} remain valid also if one restricts to $H_{-1}$. 
 We nevertheless chose to consider the restriction to the open set 
 $H$ in the main part of the paper because it is much simpler and also 
 exhibits the practical advantage that one can easily derive with 
 respect to the string variables.

 \begin{remk}
  \label{remk:H_minus_1_homogeneity}
  A qualitative and simpler argument that the restriction to $H_{-1}$ 
  is unproblematic is the homogeneity in the string-variables of all 
  string-localized propagators of degree $\omega=0$: 
  When one interprets $H$ as $H_{-1}\times \bR_{\ge0}$, the ``radial'' 
  part is constant and can simply be integrated out with the radial part 
  of the test function.
 \end{remk}

 \subsection{Purely spacelike strings}
 \label{app:purely_sl}
 Another case appearing in the literature \cite{MundRS20} is the case 
 of purely spacelike string variables $e=(0,\vec{e})$, for example with 
 $|\vec{e}|=1$. It is motivated by the fact that the inner product 
 $-(ee')$ becomes positive definite, which is not the case in $H$ or 
 $H_{-1}$. This case can be investigated by adjusting the inclusion 
 map (\refeq{eq:def_iota_lightlike}) from the lightlike case by setting 
 the zero-component of $e$ to $0$ instead of $1$. Then the only -- but 
 very important -- difference in the wavefront set analysis is the 
 criterion for the singular support, which becomes 
 \begin{align}
  \vec{p}\cdot \vec{e}=0 \word{instead of} \vec{p}\cdot \vec{e}=p^0
  \word{in the lightlike case.}
 \end{align}
 The wavefront set of the restriction to purely 
 spacelike strings can then only contain elements 
 $(p,\varphi,\vartheta; \la e, 0,0)$ if $\vec{p}=0$ but not when 
 $p=\kappa e$ for some $\ka \ne 0$, in contrast to the lightlike 
 case. Nevertheless, the appearance of these critical elements 
 in the wavefront set 
 can happen for arbitrary $p^0$ and hence, our results from 
 Sections \ref{sec:slf} and \ref{sec:WF} cannot be directly 
 transferred to a restriction to purely spacelike strings. 
 
 However, in the mentioned application \cite{MundRS20}, a 
 time-ordering of the string-localized expression is not required
 because there, the string-localized part is perturbed with a 
 point-localized Lagrangian.

\bibliographystyle{plain}
\footnotesize

\end{document}